\definecolor{linkcolor}{rgb}{0.65,0,0}
\definecolor{citecolor}{rgb}{0,0.65,0}
\definecolor{urlcolor}{rgb}{0,0,0.65}
\def\subheading#1{\medskip\noindent{\boldmath\textbf{#1}}~\ignorespaces}
\spnewtheorem{construction}[theorem]{Construction}{\bfseries}{\itshape}
\providecommand{\mytitle}[2][***]{\title{#2}}
\providecommand{\myauthor}[1]{\author{#1}}
\providecommand{\myinstitute}[1]{\institute{\renewcommand{\inst}[1]{\!\!}#1}}
\newcommand{\eps}{\varepsilon}
\newcommand{\bin}{\ensuremath{\{0,1\}}}
\newcommand{\olrk}[1]{  \ifx\nursymbol#1\else\!\!\mskip4.5mu plus 0.5mu\left(#1\right)\fi}
\newcommand{\negl}[1]{  \ensuremath{\textrm{negl}\olrk{#1}}}
\newcommand{\poly}[1]{  \ensuremath{\operatorname{poly}\olrk{#1}}}
\newcommand{\xor}{\oplus}
\newcommand{\ppttxt}{probabilistic polynomial-time\xspace}
\newcommand{\NN}{\mathbb{N}}
\newcommand{\RR}{\mathbb{R}}
\newcommand{\exec}{\ensuremath{\longleftarrow}}
\newcommand{\rand}{\ensuremath{\stackrel{\$}{\longleftarrow}}}
\newcommand{\F}{\ensuremath{F}}
\newcommand{\ff}{\ensuremath{\mathcal{F}}}
\newcommand{\range}{\ensuremath{\mathcal{Y}}}
\newcommand{\kSpace}{\ensuremath{\mathcal{K}}}
\newcommand{\msgSpace}{\ensuremath{\mathcal{M}}}
\newcommand{\init}{\ensuremath{\mathcal{I}}\xspace}
\newcommand{\adver}{\ensuremath{\mathcal{A}}\xspace}
\newcommand{\D}{\ensuremath{\mathcal{D}}\xspace}
\newcommand{\A}{\adver}
\newcommand{\C}{\cdv}
\newcommand{\K}{\ensuremath{\mathcal{K}}\xspace}
\newcommand{\M}{\ensuremath{\mathcal{M}}\xspace}
\newcommand{\X}{\ensuremath{\mathcal{X}}\xspace}
\newcommand{\Y}{\ensuremath{\mathcal{Y}}\xspace}
\newcommand{\MA}{\ensuremath{\M^\A}\xspace}
\newcommand{\cdv}{\ensuremath{\mathcal{C}}\xspace}
\newcommand{\pr}{\ensuremath{\mathsf{Pr}}} 
\newcommand{\oracle}{\ensuremath{\mathcal{O}}\xspace}
\newcommand{\defas}{\stackrel{def}{=}}
\newcommand{\Gen}{\ensuremath{\mathsf{Gen}\xspace}}
\newcommand{\Enc}{\ensuremath{\mathsf{Enc}\xspace}}
\newcommand{\Dec}{\ensuremath{\mathsf{Dec}\xspace}}
\newcommand{\E}{\ensuremath{\mathcal{E}}}
\newcommand{\from}{\ensuremath{\leftarrow}}
\newcommand{\hilbert}{\ensuremath{\mathcal{H}\xspace}}
\renewcommand{\poly}[1]{\ensuremath{{\text{poly}\left(#1\right)}\xspace}}
\renewcommand{\sim}{\ensuremath{\mathcal{S}}\xspace}
\newcommand{\supp}[1]{\ensuremath{\text{Supp}\left(#1\right)\xspace}}
\renewcommand{\negl}[1]{\ensuremath{\text{negl}\left(#1\right)\xspace}}
\newcommand{\sketbra}[2]{{\ensuremath{\lvert #1\rangle\!\langle #2\rvert}}}
\newcommand{\lketbra}[2]{{\ensuremath{\left\lvert #1\right\rangle\!\!\left\langle #2\right\rvert}}}
\newcommand{\ketbra}[2]{\if@display\lketbra{#1}{#2}\else\sketbra{#1}{#2}\fi}
\newcommand{\proj}[1]{\ketbra{#1}{#1}}
\newcommand{\one}{\mathbb{I}}
\DeclareMathOperator{\tr}{tr}
\begin{document}

\mytitle{Semantic Security and Indistinguishability\\in the Quantum World}
\author{}
\institute{}
\subtitle{{\small \today\footnote{An extended abstract of this work appears in the proceedings of CRYPTO 2016. This is the full version.} \vspace{-0.2in}}}
\myauthor{Tommaso Gagliardoni\inst{1} \and Andreas H\"ulsing\inst{2} \and Christian Schaffner\inst{3,4,5}} 
\myinstitute{
\inst{1} CASED, Technische Universit\"at Darmstadt, Germany \\ {\tt tommaso@gagliardoni.net}\and 
\inst{2} TU Eindhoven, The Netherlands \\ {\tt andreas@huelsing.net}\and
\inst{3} Institute for Logic, Language and Compuation (ILLC),\\ University of Amsterdam, The Netherlands\\ {\tt c.schaffner@uva.nl} \and
\inst{4} Centrum Wiskunde \& Informatica (CWI) Amsterdam, The Netherlands \and
\inst{5} \href{http://www.qusoft.org}{QuSoft}, The Netherlands\\
}

\maketitle

\begin{abstract}
\vspace{-0.2in}
At CRYPTO 2013, Boneh and Zhandry initiated the study of quantum-secure encryption. They proposed first indistinguishability definitions for the quantum world where the actual indistinguishability only holds for classical messages, and they provide arguments why it might be hard to achieve a stronger notion. In this work, we show that stronger notions are achievable, where the indistinguishability holds for quantum superpositions of messages. We investigate exhaustively the possibilities and subtle differences in defining such a quantum indistinguishability notion for symmetric-key encryption schemes. We justify our stronger definition by showing its equivalence to novel quantum semantic-security notions that we introduce. Furthermore, we show that our new security definitions cannot be achieved by a large class of ciphers -- those which are quasi-preserving the message length. On the other hand, we provide a secure construction based on quantum-resistant pseudorandom permutations; this construction can be used as a generic transformation for turning a large class of encryption schemes into quantum indistinguishable and hence quantum semantically secure ones. Moreover, our construction is the first completely classical encryption scheme shown to be secure against an even stronger notion of indistinguishability, which was previously known to be achievable only by using quantum messages and arbitrary quantum encryption circuits.
\end{abstract}

\setcounter{page}{1}

\vspace{-0.3in}

\section{Introduction}

Quantum computers~\cite{NC00} threaten many cryptographic schemes. By using Shor's algorithm \cite{Sho94} and its variants \cite{Watrous01}, an adversary in possession of a quantum computer can break the security of every scheme based on factorization and discrete logarithms, including RSA, ElGamal, elliptic-curve primitives and many others. Moreover, longer keys and output lengths are required in order to maintain the security of block ciphers and hash functions \cite{Grover96,Brassard97}. These difficulties led to the development of {\em post-quantum cryptography}~\cite{BuchPQ}, i.e., classical cryptography resistant against quantum adversaries. 

When modeling the security of cryptographic schemes, care must be taken in defining exactly what property one wants to achieve. In classical security models, all parties and communications are classical. When these notions are used to prove {\em post-quantum} security, one must consider adversaries having access to a quantum computer. This means that, while the communication between the adversary and the user is still classical, the adversary might carry out computations on a quantum computer. 

Such post-quantum notions of security turn out to be unsatisfying in certain scenarios. For instance, consider quantum adversaries able to use {\em quantum superpositions} of messages $\sum_x \alpha_x \ket{x}$ instead of classical messages when communicating with the user, even though the cryptographic primitive is still classical. This kind of scenario is considered, e.g., in~\cite{Boneh2013,Damgard13,Unruh12,Watrous09,Zhandry2012}. Such a setting might for example occur in a situation where one party using a quantum computer encrypts messages for another party that uses a classical computer and an adversary is able to observe the outcome of the quantum computation before measurement. Other examples are an attacker which is able to trick a classical device into showing quantum behavior, or a classical scheme which is used as subprotocol in a larger quantum protocol. 
Another possibility occurs when using obfuscation. There are applications where one might want to distribute the obfuscated code of a symmetric-key encryption scheme (with the secret key hardcoded) in order to allow a third party to generate ciphertexts without being able to retrieve the key - think of this as building public-key encryption from symmetric-key encryption using Indistinguishability Obfuscation. Because in these cases an adversary receives the classical code for producing encryptions, he could implement the code on his local quantum computer and query the resulting quantum circuit on a superporition of inputs. Moreover, even in quantum reductions for classical schemes situations could arise where superposition access is needed. A typical example are impossibility results (such as meta-reductions~\cite{FSQROM13}), where giving the adversary additional power often rules out a broader range of secure reductions. Notions covering such settings are often called {\em quantum-security} notions. In this work we propose new quantum-security notions for encryption schemes. 

For encryption, the notion of {\em semantic security}~\cite{GM84,Goldreich2004} has been traditionally used. This notion models in abstract terms the fact that, without the corresponding decryption key, it is impossible not only to correctly decrypt a ciphertext, but even to recover any non-trivial information about the underlying plaintext. The exact definition of semantic security is cumbersome to work with in security proofs as it is simulation-based. Therefore, the simpler notion of {\em ciphertext indistinguishability} has been introduced. This notion is given in terms of an interactive game where an adversary has to distinguish the encryptions of two messages of his choice. The advantage of this definition is that it is easier to work with than (but equivalent to) semantic security.

To the best of our knowledge, no quantum semantic-security notions for classical encryption schemes have been proposed so far. For indistinguishability, Boneh and Zhandry introduced indistinguishability notions for quantum-secure encryption under chosen-plaintext attacks in a recent work~\cite{Boneh2013}. They consider a model (IND-qCPA) where a quantum adversary can query the encrypting device in superposition during a learning phase but is limited to classical communication during the actual challenge phase. However, in the symmetric-key scenario, this approach has the following shortcoming: If we assume that an adversary can get quantum access in a learning phase, it seems unreasonable to assume that he cannot get such access when the actual message of interest is encrypted. Boneh and Zhandry showed that a seemingly natural notion of quantum indistinguishability is unachievable. In order to restore a meaningful definition, they resorted to the compromise of IND-qCPA.

\subheading{Our contributions.} In this paper we achieve two main results. On the one hand, we initiate the study of semantic security in the quantum world, providing new definitions and a thorough discussion about the motivations and difficulties of modeling these notions correctly. This study is concluded by a suitable definition of {\em quantum semantic security under chosen plaintext attacks (qSEM-qCPA)}. On the other hand, we extend the fundamental work initiated in~\cite{Boneh2013} in finding suitable notions of indistinguishability in the quantum world. We show that the compromise that had to be reached there in order to define an achievable notion instead of a more natural one (i.e., IND-qCPA vs. fqIND-qCPA) can be overcome -- although not trivially. We show how various other possible notions of quantum indistinguishability can be defined. All these security notions span a tree of possibilities which we analyze exhaustively in order to find the most suitable definition of {\em quantum indistinguishability under chosen plaintext attacks (qIND-qCPA)}. We prove this notion to be achievable, strictly stronger than IND-qCPA, and equivalent to qSEM-qCPA, thereby completing an elegant framework of security notions in the quantum world, see Figure~\ref{fig:relations} below for an overview.

Furthermore, we formally define the notion of a {\em core function} and {\em quasi--length-preserving ciphers} -- encryption schemes which essentially do not increase the plaintext size, such as stream ciphers and many block ciphers including AES -- and we show the impossibility of achieving our new security notion for this kind of schemes. While this impossibility might look worrying from an application perspective, we also present a transformation that turns a block cipher into an encryption scheme fulfilling our notion. This transformation also works in respect to an even stronger notion of indistinguishability in the quantum world, which was introduced in~\cite{BJ15}, and previously only known to be achievable in the setting of {\em computational quantum encryption}, that is, the scenario where all the parties have quantum computing capabilities, and encryption is performed through arbitrary quantum circuits operating on quantum data. Even if this scenario goes in a very different direction from the scope of our work, it is interesting to note that our construction is the first fully classical scheme secure even in respect to such a purely quantum notion of security.

\subheading{The `frozen smart-card' example.} In order to clarify why quantum security allows the adversary {\em quantum superposition access} to classical primitives - as opposed to the case of post-quantum security - we give a motivating example. In this mind experiment, we consider a not-so-distant future where the target of an attack is a tiny encryption chip, e.g., integrated into an RFID tag or smart-card. It is reasonable to assume that it will include elements of technology currently researched but undeployed (i.e., extreme miniaturization, optical electronics, etc.) Regardless, the chip we consider is a purely classical device, performing classical encryption (e.g. AES) on classical inputs, and outputting classical outputs. 

Consider an adversary equipped with some future technology which subjects the device to a fault-injection environment, by varying the physical parameters (such as temperature, power, speed, etc.) under which the device usually operates. As a figurative example, our `quantum hacker' could place the chip into an isolation pod, which keeps the device at a very low temperature and shields it from any external electromagnetic or thermal interference. This situation would be analogous to what happens when security researchers perform side channel analysis on cryptographic hardware in nowaday's labs, using techniques such as thermal or electromagnetic manipulation which were previously considered futuristic.
There is no guarantee that, under these conditions, the chip does not start to show full or partial quantum behaviour. At this point, the adversary could query the device on a superposition of plaintexts by using, e.g., a laser and an array of beam splitters when feeding signals into the chip via optic fiber. 

It is unclear today what a future attacker might be able to achieve using such an attack. As traditionally done in cryptography, we assume the worst-case scenario where the attacker can actually query the target device in superposition. Classical security notions such as IND-CPA do not cover this scenario while our new notion qIND-qCPA does. This setting is an example of what we mean by `tricking classical parties into quantum behaviour'.

\subheading{Related work.} The idea of considering scenarios where a quantum adversary can force other parties into quantum behaviour has been first considered in~\cite{Damgard13}. Attacks exploiting classical encryptions in quantum superposition have been described in~\cite{KuwakadoM10,KuwakadoM12,KLLN16arxiv,SS16arxiv}. In~\cite{Boneh2013} the authors also consider the security of signature schemes where the adversary can have quantum access to a signing oracle. Quantum superposition queries have also been investigated relatively to the random oracle model~\cite{QROM}. Another quantum indistinguishability notion has been suggested (but not further analyzed) by Velema in~\cite{Velema2013}. Prior work has considered the security of quantum methods to encrypt classical data in the computational setting~\cite{sim2,sim3}. In concurrent and independent work, Broadbent and Jeffery~\cite{BJ15} introduce indistinguishability notions for the public- and secret-key encryption of quantum messages in the context of fully homomorphic quantum computation. We refer to Page~\pageref{BJcomparison} for a more detailed description of how their definitions relate to our framework. A more complete overview for these notions, including semantic security for quantum encryption schemes, can be found in another concurrent work~\cite{Alagic+16arxiv}.

\section{Preliminaries}\label{sec:preliminaries}
In this section, we briefly recall the classical security notions for encryption schemes secure against chosen plaintext attacks (CPA). In addition, we revisit the two existing indistinguishability notions for the quantum world. We start by introducing notation we will use throughout the paper.

We say that a function $f: \NN \to \RR$ is {\em polynomially bounded} iff there exists a polynomial $p$ and a value $\bar{n} \in \NN$ such that: for every $n\geq \bar{n}$ we have that $f(n) \leq p(n)$; in this case we will just write $f = \poly{n}$. We say that a function $\eps: \NN \to \RR$ is {\em negligible}, if and only if for every polynomial $p$, there exists an $n_p \in \NN$ such that $\eps(n) \leq \frac{1}{p(n)}$ for every $n \geq n_p$; in this case we will just write $\eps = \negl{n}$. In this work, we focus on secret-key encryption schemes.
In all that follows we use $n\in\NN$ as the security parameter.

\begin{definition}[Secret-key encryption scheme \cite{Goldreich2004}] 
A {\em secret-key encryption scheme} is a triple of probabilistic polynomial-time algorithms $(\Gen$, $\Enc$, $\Dec)$ operating on a message space $\msgSpace= \bin^m$ (where $m=\poly{n} \in \NN$) that fulfills the following two conditions: 
\begin{enumerate}
\item The key generation algorithm $\Gen(1^n)$ on input of security parameter $n$ in unary outputs a bitstring $k$. 
\item \label{ENC:correctness}For all $k$ in the range of $\Gen(1^n)$ and any message $x \in \msgSpace$, the algorithms $\Enc$ (encryption) and $\Dec$ (decryption) satisfy 
$\pr [\Dec(k,\Enc(k,x)) = x] = 1$, where the probability is taken over the internal coin tosses of $\Enc$ and $\Dec$.
\end{enumerate}
\end{definition}

We write $\kSpace$ for the range of $\Gen(1^n)$ (the key space) and $\Enc_k(x)$~for~$\Enc(k,x)$.

\subsection{Classical Security Notions: IND-CPA and SEM-CPA.}

We turn to security notions for encryption schemes. In this work, we will only look at the notions of indistinguishability of ciphertexts under adaptively chosen plaintext attack (IND-CPA), and semantic security under adaptively chosen plaintext attack (SEM-CPA), which are known to be equivalent (e.g., \cite{Goldreich2004}).

\subheading{Game-based definitions.} In general these notions can be defined as a game between a challenger \C and an adversary \A. First, \C generates a legitimate key running $k\exec \Gen(1^n)$ which he uses throughout the game. The game starts with a first learning phase. A challenge phase follows where \A receives a challenge. Afterwards, a second learning phase follows, and finally \A has to output a solution. The learning phases define the type of attack, and the challenge phase the notion captured by the game. We give all our definitions by referring to this game framework and by defining a learning and a challenge phase.

\subheading{The CPA learning phase:} \A is allowed to adaptively ask \C for encryptions of messages of his choice. \C answers the queries using key $k$. Note that this is equivalent to saying that \A gets oracle access to an encryption oracle that was initialized with key $k$.

\subheading{The IND challenge phase:} \A defines a challenge template consisting of two 
equal-length messages $x_0,x_1$, and sends it to \C. The challenger \C samples a random bit $b \rand \bin$ uniformly at random, and replies with the encryption $\Enc_k(x_b)$. \A's goal is to guess $b$.

\begin{definition}[IND-CPA]\label{def:indcpa}
A secret-key encryption scheme is said to be {\em IND-CPA secure} if the success probability of any \ppttxt adversary winning the game defined by CPA learning phases and an IND challenge phase is at most negligibly (in $n$) close to $1/2$.
\end{definition}

\subheading{The SEM challenge phase:} \A sends \C a challenge template $(S_m,h_m,f_m)$ consisting of a poly-sized circuit $S_m$ specifying a distribution over $m$-bit long plaintexts, an advise function $h_m:\bin^m\rightarrow\bin^*$, and a target function $f_m:\bin^m\rightarrow\bin^*$. The challenger \C replies with the pair $(\Enc_k(x), h_m(x))$ where $x$ is sampled according to $S_m$. \A's challenge is to output $f_m(x)$.

In the definition of semantic security it is not required that \A's probability of winning the game is always negligible. Instead, \A's success probability is compared to that of a simulator \sim that plays in a \emph{reduced game}: On one hand, \sim gets no learning phases. On the other hand, during the challenge phase, \sim does not receive the ciphertext but only the output of the advice function. This use of a simulator is what makes the notion hard to work with in proofs as one has to construct a simulator for every possible \A to prove a scheme secure.

\begin{definition}[SEM-CPA]\label{def:semcpa}
A secret-key encryption scheme is said to be {\em SEM-CPA secure} if for any \ppttxt adversary \A there exists a \ppttxt simulator \sim such that the challenge templates produced by \sim and \A are identically distributed and the success probability of \A winning the game defined by CPA learning phases and a SEM challenge phase (computed over the coins of \A, \Gen , and $S_m$) is negligibly close (in $n$) to the success probability of \sim winning the reduced game.
\end{definition}

Semantic security models what we want an encryption scheme to achieve: An adversary given a ciphertext can learn nothing about the encrypted message which he could not also learn from his knowledge of the message distribution and possibly existing side-information (modeled by $h_m$). Indistinguishability of ciphertexts is an equivalent technical notion introduced to simplify proofs.

\subsection{Previous Notions of Security in the Quantum World}

We briefly recall the results from \cite{Boneh2013} about quantum indistinguishability notions. We refer to \cite{NC00} for commonly used notation and quantum information-theoretic concepts. Given security parameter $n$, let $\{\hilbert_n\}_n$ be a family of complex Hilbert spaces such that $\dim{\hilbert_n} = 2^{\poly{n}}$. We assume that $\hilbert_n$ contains all the subspaces where the message states, the ciphertext states and any auxiliary state live. For the sake of simplicity we will not make a distinction when writing that a state $\ket{\varphi}$ belongs to one particular subspace, and we will omit the index $n$ when the security parameter is implicit, therefore writing just $\ket{\varphi} \in \hilbert$. We will denote pure states with ket notation, e.g., $\ket{\varphi}$, while mixed states will be denoted by lowercase Greek letters, e.g. $\rho$. We start by defining what we call a {\em classical description of a quantum state}:

\begin{definition}[Classical Description]\label{def:classrepr}
A {\em classical description} of a quantum state $\rho$ is a (classical) bitstring describing a quantum circuit $S$ which (takes no input but starts from a fixed initial state $\ket{0}$ and) outputs $\rho$. 
\end{definition}

This definition will be used later in our new notions of security. We deviate here from the traditional meaning of `classical description' referring to individual numerical entries of the density matrix. The reason is that our definition also covers the cases where those numerical entries are not easily computable, as long as we can give an explicit constructive procedure for that state. Clearly, every pure quantum state $\ket{\varphi}$ has a classical description (given by a description of the quantum circuit which implements the unitary that maps $\ket{0}$ to $\ket{\varphi}$. The classical description of a mixed state $\rho_A$ is given by the circuit which first creates a purification $\ket{\varphi}_{AR}$ of $\rho_A$ and then only outputs the $A$ register. Note that a state admitting a classical description cannot be entangled with any other system.

For encryption, following the approach in \cite{Boneh2013} and many other works, we define the following:

\begin{definition}[Quantum Encryption Oracle \cite{Boneh2013}]\label{def:quantumEncryptionOracle}
Let $\Enc$ be the encryption algorithm of a secret-key encryption scheme $\E$. We define the {\em quantum encryption oracle} $U_{\Enc_k}$ associated with $\E$ and initialized with key $k$ as (a family of) unitary operators defined by: 
\begin{equation}\label{eq:Uenc}
U_{\Enc_k}: \sum_{x,y} \alpha_{x,y} \ket{x}\ket{y} \mapsto \sum_{x,y} \alpha_{x,y} \ket{x}\ket{y \oplus \Enc_k(x)}
\end{equation}
where the same randomness $r$ is used in superposition in all the executions of $\Enc_k(x)$ within one query\footnote{As shown in~\cite{Boneh2013}, this is not restrictive.} -- for each new query, a fresh independent $r$ is used.
\end{definition}

The first indistinguishability notion proposed in~\cite{Boneh2013} replaces all classical communication between \A and \C by quantum communication. \A and \C are now quantum circuits operating on quantum states, and sharing a certain number of qubits (the quantum communication register). The definition for the new security game is obtained from Definition~\ref{def:indcpa} by changing the learning and challenge phases as follows:

\subheading{Quantum CPA learning phase (qCPA):} \A gets oracle access to $U_{\Enc_k}$.

\subheading{Fully quantum IND challenge phase (fqIND):} \A prepares the communication register in the state $\sum_{x_0,x_1,y} \alpha_{x_0,x_1,y} \ket{x_0}\ket{x_1}\ket{y}$, consisting of two $m$-qubit states (the two input-message superpositions) and an ancilla state to store the ciphertext. \C samples a bit $b\rand\bin$ and applies the transformation:
$$
\sum_{x_0,x_1,y} \alpha_{x_0,x_1,y} \ket{x_0}\ket{x_1}\ket{y} \mapsto \sum_{x_0,x_1,y} \alpha_{x_0,x_1,y} \ket{x_0}\ket{x_1}\ket{y \oplus \Enc_k(x_b)}. 
$$
\A's goal is to output $b$.

The resulting security notion in~\cite{Boneh2013} is called {\em indistinguishability under fully quantum chosen-message attacks (IND-fqCPA)}. We decided to rename it to {\em fully quantum indistinguishability under quantum chosen-message attacks (fqIND-qCPA)} in order to fit into our naming scheme: It consists of a quantum CPA learning phase and a fully quantum IND challenge phase.

\begin{definition}[fqIND-qCPA]
A secret-key encryption scheme is said to be {\em fqIND-qCPA secure} if the success probability of any quantum \ppttxt adversary winning the game defined by qCPA learning phases and a fqIND challenge phase is at most negligibly close (in $n$) to $1/2$.  
\end{definition}

As already observed in \cite{Boneh2013}, this notion is unachievable. The separation by Boneh and Zhandry exploits the entanglement of quantum states, namely the fact that entanglement can be created between plaintext and ciphertext.

\begin{theorem}[BZ attack {\cite[Theorem 4.2]{Boneh2013}}]\label{thm:BZatt}
No symmetric-key encryption scheme can achieve fqIND-qCPA security.
\end{theorem}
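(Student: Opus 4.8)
The plan is to exhibit an explicit quantum adversary \A that wins the fqIND-qCPA game with probability essentially $1$, regardless of the encryption scheme, by exploiting the reversibility of the challenge unitary together with entanglement between the two input registers. The core observation is that the fqIND challenge oracle acts on a superposition $\sum_{x_0,x_1,y}\alpha_{x_0,x_1,y}\ket{x_0}\ket{x_1}\ket{y}$ by XORing $\Enc_k(x_b)$ into the last register, and crucially it leaves the two message registers $\ket{x_0}$ and $\ket{x_1}$ completely untouched. This means the challenger's action depends on $b$ only through which of the two message registers is used to compute the ciphertext. If \A can arrange the input so that the resulting global state is sharply different depending on whether the ciphertext was derived from $x_0$ or from $x_1$, then a final measurement recovers $b$.

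First I would have \A prepare a carefully chosen entangled input rather than a product state. A clean choice is to put the first message register in a uniform superposition $\frac{1}{\sqrt{|\msgSpace|}}\sum_{x}\ket{x}$, fix the second message register to a single classical value (say $\ket{0}$), and initialize the ancilla $\ket{y}$ to $\ket{0}$. After the challenger applies the oracle with hidden bit $b$, the state becomes either $\frac{1}{\sqrt{|\msgSpace|}}\sum_{x}\ket{x}\ket{0}\ket{\Enc_k(x)}$ (if $b=0$) or $\frac{1}{\sqrt{|\msgSpace|}}\sum_{x}\ket{x}\ket{0}\ket{\Enc_k(0)}$ (if $b=1$). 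In the $b=1$ case the ciphertext register is unentangled from the first register and the first register remains in a uniform superposition, whereas in the $b=0$ case the ciphertext register is (by correctness of decryption, since $\Enc_k$ is essentially injective on its support for fixed randomness) entangled with the first register. Applying a Hadamard transform to the first register and measuring then distinguishes the two cases: in the $b=1$ branch the first register collapses back to $\ket{0}$ with certainty, while in the $b=0$ branch the entanglement with the ciphertext has destroyed the uniform superposition, so the all-zero outcome occurs only with probability $1/|\msgSpace|$.

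The key step I would carry out carefully is the analysis of the reduced state of the first register after the oracle call, invoking the perfect-correctness condition from the definition of a secret-key encryption scheme: for fixed key $k$ and fixed randomness $r$, the map $x\mapsto\Enc_k(x)$ is injective (otherwise decryption could not succeed with probability $1$). Injectivity guarantees that in the $b=0$ branch the ciphertext values $\Enc_k(x)$ are distinct for distinct $x$, so tracing out the ciphertext leaves the first register in the maximally mixed state on $\msgSpace$; the Hadamard-then-measure test therefore yields the all-zero string with probability only $1/|\msgSpace|=\negl{n}$. This cleanly separates the two hypotheses and shows \A guesses $b$ with advantage negligibly close to $1/2$, contradicting fqIND-qCPA security.

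The main obstacle is handling the encryption randomness $r$ correctly. By Definition~\ref{def:quantumEncryptionOracle} a single fresh $r$ is used in superposition across the whole query, so within one challenge call the ciphertext register genuinely holds $\ket{\Enc_k(x;r)}$ for one common $r$; I must argue that for each fixed $r$ the injectivity argument goes through, and then average over the choice of $r$ (and over $b$ and the key) without the randomness reintroducing the collision structure that would blur the two branches. I expect this averaging to be the only delicate point, and it is resolved by noting that injectivity of $\Enc_k(\cdot;r)$ holds for \emph{every} fixed $r$ in the support, so the negligible-collision bound is uniform in $r$ and survives the average.
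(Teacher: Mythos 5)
Your attack is the paper's own BZ attack with the roles of the two message registers swapped: one challenge message is the uniform superposition $H\ket{0^m}$, the other a computational-basis state, and the hidden bit is read off by testing (via Hadamard-then-measure) whether entanglement with the ciphertext register destroyed the superposition. Your explicit appeal to injectivity of $\Enc_k(\cdot;r)$ for each fixed $r$ (forced by perfect correctness) correctly fills in a detail the paper leaves implicit when it asserts the entangled branch is a ``fully entangled state'' whose partial trace is completely mixed.
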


\begin{proof}
The attack works as follows: The adversary \A chooses as challenge messages the states $\ket{0^m}$ and $H\ket{0^m}$ (where $H$ denotes the $m$-fold tensor Hadamard transform), i.e.\ he prepares the register in the state $\sum_x \frac{1}{2^{m/2}} \ket{0^m,x,0^m}$. When the challenger \C performs the encryption, we can have two cases:
\begin{itemize}
\item if $b=0$, i.e. the first message state is chosen, the state is transformed into
$$
\sum_x \frac{1}{2^{m/2}} \ket{0^m,x,\Enc_k(0^m)} % = \sum_x \frac{1}{2^{m/2}} \ket{0^m} \otimes \ket{x} \otimes \ket{\Enc_k(0^m)} =
= \ket{0^m} \otimes H\ket{0^m} \otimes \ket{\Enc_k(0^m)};
$$
\item if $b=1$, i.e. the second message state is chosen, the state is transformed into
$$
\sum_x \frac{1}{2^{m/2}} \ket{0^m,x,\Enc_k(x)} = \ket{0^m} \otimes \sum_x \frac{1}{2^{m/2}} \ket{x,\Enc_k(x)}.
$$
\end{itemize}
Notice that in the second case we have a fully entangled state between the second and the third register. At this point, \A does the following:
\begin{enumerate}
\item measures (traces out) the third register;
\item applies again $H$ to the second register;
\item measures the second register;
\item outputs $b'=1$ iff the outcome of this last measurement is $0^m$, else outputs~$0$.
\end{enumerate}
In fact, if $b=0$, then the second register is left untouched: By applying again the Hadamard transformation it will be reset to the state $\ket{0^m}$, and a measurement on this state will yield $0^m$ with probability 1. If $b=1$ instead, tracing out one half of a fully entangled state results in a complete mixture in the second register. Applying a Hadamard transform and measuring in the computational basis necessarily gives a fully random outcome, and hence outcome $0^m$ only with probability $\frac{1}{2^m}$, which is negligible in $n$, because $m = \poly{n}$.
\qed
\end{proof}

Theorem~\ref{thm:BZatt} implies that the fqIND-qCPA notion is too strong. In order to weaken it, the following notion of indistinguishability under adaptively chosen quantum plaintext attacks was introduced:

\begin{definition}[IND-qCPA \cite{Boneh2013}]
A secret-key encryption scheme is said to be {\em IND-qCPA secure} if the success probability of any quantum \ppttxt adversary winning the game defined by qCPA learning phases and a classical IND challenge phase is at most negligibly close (in $n$) to $1/2$.
\end{definition}

In this definition, the CPA queries are allowed to be quantum, but the challenge query is required to be classical. It has been shown that, under standard computational assumptions, IND-qCPA is strictly stronger than IND-CPA: 

\begin{theorem}[IND-CPA $\not\Rightarrow$ IND-qCPA {\cite[Theorem 4.8]{Boneh2013}}]\label{thm:separationBZ}
If classically secure PRFs exist and order-finding in prime groups is classically hard, then 
there exists an encryption scheme $\E$ which is IND-CPA secure, but not IND-qCPA secure.
\end{theorem}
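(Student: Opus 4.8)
The plan is to exploit the single place where \qind differs from \ind: a \qind adversary may query the encryption oracle $U_{\Enc_k}$ on superpositions of plaintexts during the learning phase, whereas an \ind adversary is confined to classical queries. The textbook gap between classical and quantum query power is order (period) finding, where Shor's algorithm extracts a hidden period from a single coherent evaluation while the same task is believed infeasible classically. Accordingly, I would build a scheme whose encryption oracle secretly computes a function whose period is the critical secret, arrange that this period is recoverable quantumly but not classically, and then win the (classical) challenge by decrypting once the period is known.

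For the construction I would start from any \ind-secure scheme -- which exists given a classically secure PRF -- and augment it with a quantum trapdoor. Fix a prime group in which order-finding is classically hard and let the key determine a group element $g$ whose secret order $t = \mathrm{ord}(g)$ is the sensitive quantity. Embed into $\Enc_k$ a function $F(x) = \mathrm{PRF}_K(g^{x})$; since $g^{x+t}=g^{x}$ we have $F(x+t)=F(x)$, so $F$ has period $t$, while pseudorandomness of $\mathrm{PRF}_K$ hides this periodicity in the outputs of $F$. I would lay out $\Enc_k$ so that $F(x)$ occupies a designated part of the ciphertext and so that knowledge of $t$ is enough to decrypt, and then verify that correctness is preserved. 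The property from Definition~\ref{def:quantumEncryptionOracle} that one query reuses a single randomness across the whole superposition is essential here: it guarantees that, within a coherent query, the ciphertext is a genuine deterministic function of $x$, so its periodic structure survives.

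To prove \ind security I would argue by a short sequence of hybrids. First replace the PRF by a truly random function; by classical PRF security no \ppttxt adversary notices the change, and $F$ now reveals nothing about its inputs beyond equality patterns. Second, observe that a classical adversary evaluates $F$ at only polynomially many points, so its only avenue to the period is to compute $\mathrm{ord}(g)$ directly from the public group data -- which is exactly an order-finding instance and hence hard by assumption. Combining the two reductions shows that the augmentation leaks nothing a \ppttxt adversary can use, so the scheme inherits \ind security from the base scheme.

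For the \qind break I would exhibit an explicit quantum \ppttxt adversary. During the qCPA learning phase it prepares $\sum_x \ket{x}\ket{0}$, applies $U_{\Enc_k}$, and measures the ciphertext register, collapsing the input register onto a coset $\sum_j \ket{x_0 + jt}$; applying the quantum Fourier transform and sampling, then running continued fractions on polynomially many samples, recovers $t$ with probability $1-\negl{n}$. Knowing $t$, the adversary decrypts the classical challenge ciphertext and outputs the correct bit, so its success probability is close to $1$ rather than $1/2$. The main obstacle is the joint design rather than any single step: the scheme must make the period quantumly extractable through $U_{\Enc_k}$ (so neither the PRF nor the encryption randomness may destroy periodicity in superposition), keep it classically shielded by order-finding hardness, and still be correct and \ind-secure. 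Reconciling all four demands in one scheme -- especially ensuring that measuring the ciphertext register collapses onto a clean coset of $t$ and not some finer structure -- is the delicate part.
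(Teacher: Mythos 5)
This theorem is not proved in the paper at all: it is imported from Boneh--Zhandry \cite[Theorem~4.8]{Boneh2013}, so your proposal has to be measured against their construction, whose outline (a secretly periodic function inside the encryption oracle, classical security from a PRF hybrid plus order-finding hardness, a quantum break via Shor-style period finding) your plan indeed shares -- this outline is essentially forced by the hypotheses. The problem is that the two places you leave ``to be arranged'' are exactly the places where the proof lives, and as written both fail. First, your scheme is not even IND-CPA secure: with $F(x) = \mathrm{PRF}_K(g^{x})$ deterministic in $x$ and sitting in a designated slot of the ciphertext, a \emph{classical} adversary queries the encryption oracle on the challenge messages $m_0,m_1$ themselves (the IND-CPA game allows this), records $F(m_0)$ and $F(m_1)$, and matches them against the $F$-slot of the challenge ciphertext. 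Your own hybrid observation -- that after replacing the PRF by a random function, $F$ leaks ``only equality patterns'' -- is precisely the break: the equality pattern between challenge and learning queries reveals $b$. Repairing this forces the periodic slot to absorb the per-query randomness, e.g.\ $\mathrm{PRF}_K(r,g^{x})$, and then your second unsupported claim becomes acute: knowing $t=\mathrm{ord}(g)$ does not let you ``decrypt'' anything, since the challenge ciphertext is a single fresh classical encryption whose every component is masked. Boneh and Zhandry have to build an explicit mechanism by which the hidden period acts as a backdoor -- the encryption algorithm is made to behave insecurely on messages of a special, period-dependent form, so that after learning $t$ the adversary can choose challenge messages triggering that behavior, while a classical adversary cannot produce such messages unless it solves order finding. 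That mechanism, which is where the order-finding assumption actually enters, is the crux of the construction, and your proposal replaces it with an assertion.

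Second, the quantum attack fails as described. By the correctness condition, for fixed randomness $r$ the map $x \mapsto \Enc_k(x;r)$ is injective, so if you apply $U_{\Enc_k}$ to $\sum_x \ket{x}\ket{0}$ and measure \emph{the ciphertext register}, the input register collapses to a single basis state $\ket{x_0}$, not to a coset $\sum_j \ket{x_0+jt}$; there is nothing left for the QFT. Even the charitable reading -- measure only the $F$-slot -- does not work by itself: the remaining (message-determining) part of the ciphertext stays in the output register, entangled with $x$, and since it is injective on the coset, the input register's reduced state is a classical mixture of basis states, on which the Fourier sampling yields uniform noise. The attack only works if the adversary can \emph{uncompute} the non-periodic part of the ciphertext from the input register before the QFT, which is possible only for special ciphertext structure (e.g.\ a component of the form $x \oplus \mathrm{pad}$ removable by CNOTs from the input register) and must therefore be co-designed with the scheme. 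You flag this as ``the delicate part,'' but it is not a detail to be checked -- together with the backdoor mechanism above, it is the missing content of the proof.
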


\section{New Notions of Quantum Indistinguishability}\label{sec:qindqcpa}

IND-qCPA might be viewed as classical indistinguishability (IND) under a quantum chosen plaintext attack (qCPA). The authors in~\cite{Boneh2013} resorted to this definition in order to overcome their impossibility result on one seemingly natural notion of quantum indistinguishability (fqIND-qCPA) which turned out to be too strong. This raises the question whether IND-qCPA is the only possible quantum indistinguishability notion (and hence no classical encryption scheme can achieve indistinguishability of ciphertext superpositions) or if there exists a stronger notion which can be achieved. 

In this section we show that by defining fqIND-qCPA, there are many choices which are made implicitly, and that on the other hand there exist other possible quantum indistinguishability notions. We discuss these choices spanning a binary `security tree' of possible notions. Afterwards, we obtain a small set of candidate notions, eliminating those that are either ill-posed or unachievable because of the BZ attack from Theorem~\ref{thm:BZatt}. In all these notions, we implicitly assume `quantum CPA learning phases', as in the case of IND-qCPA. However, we limit the discussion in this section to the design of a quantum challenge phase. In the end, we select a suitable `qIND-'notion amongst all the possible candidate ones.

\subsection{The `Security Tree'}\label{sec:tree}
To define a general notion of indistinguishability in the quantum world, we have to consider many different distinctions for possible candidate models. For example, can we rule out certain forms of entanglement? How? Does the adversary have complete control over the challenger device? Each of these distinctions leads to a fork in a `security-model binary tree'. We analyze every `leaf' of the tree\footnote{We do not rule out that some of them might eventually lead to the same model.}. Some of them lead to unreasonable or ill-posed models, some of them yield unachievable security notions, and others are analyzed in more detail.

\subheading{Game model: Oracle $(\oracle)$ vs. Challenger $(\C)$.} This distinction decides how the game, and especially the challenge phase, is implemented. In the classical world, the following two cases are equivalent but in the quantum world they differ. In the {\em oracle} model, the adversary $\A$ gets oracle access to encryption and challenge oracles, i.e., he plays the game by performing calls to unitary gates $\oracle_1,\ldots,\oracle_q$. In this case \A is modeled as a quantum circuit which implements a sequence of unitary gates $U_0,\ldots,U_q$, intertwined by calls to the $\oracle_i$'s. Given an input state $\ket{\varphi}$, the adversary therefore computes the state:
$$
U_q \oracle_q \ldots U_1 \oracle_1 U_0 \ket{\varphi}.
$$

The {\em structure} of the {\em oracle} gates $\oracle_i$ itself is unknown to $\A$, who is only allowed to apply them in a black-box way. The fqIND notion uses this model.

\label{rewind}In what we call the {\em challenger} model instead, the game is played against an {\em external (quantum) challenger}. Here, \A is a quantum circuit which shares a quantum register (the communication channel) with another quantum circuit \C. The main difference is that in this case we can also consider what happens if \C has additional input or output lines out of \A's control. Moreover, \A does not automatically gain access to the inverse (adjoint) of quantum operations performed by \C, and \C cannot be `rewound' by the adversary, which would be far too powerful possibilities. This scenario also covers the case of `unidirectional' state transmission, i.e., when qubits are sent over a quantum channel to another party, and they are not available afterwards until that party sends them back. Regardless, in security proofs in the $(\C)$ model, it is still allowed for an external entity (e.g. a simulator, or a reduction) to rewind the joint circuit composed by adversary and challenger together, if need be. However, we are not aware of any known reduction involving rewinding in this form for encryption schemes in the quantum world.

In order to keep consistency with this choice of the model, when also considering qCPA queries, we implicitly assume the same access mode to the $\Enc_k$ oracle as in the qIND game. That is, if we are in the $(\oracle)$ scenario, during the qCPA phase \A has quantum oracle access to $\Enc_k$. In the $(\C)$ case, instead, superposition access to $\Enc_k$ is provided to \A by an external challenger.

At first glance, the $(\oracle)$ model intuitively represents the scenario where \A has almost complete control of some encryption device, whereas the $(\C)$ model is more suited to a `network' scenario where \A wants to compromise the security of some external target.

\subheading{Plaintexts: quantum states $(Q)$ vs. classical description $(c)$.} In the $(Q)$ model, the two $m$-qubit plaintexts chosen by \A for the challenge template can be arbitrary (BQP-producible) quantum states and can be entangled with each other and other states. In the $(c)$ model, instead, \A is only allowed to choose {\em classical descriptions} of two $m$-qubit quantum states according to Definition~\ref{def:classrepr}, thus being only allowed to send classical information to \C: the challenger \C will read the states' descriptions and will build one of the two states depending on his challenge bit $b$.

In classical models, there is no difference between sending a description of a message or the message itself. In the quantum world, there is a big difference between these two cases, as the latter allows \A to establish entanglement of the message(s) with other registers. This is not possible when using classical descriptions. It might intuitively appear that the $(Q)$ model (considered for the fqIND-qCPA notion) is more natural. However, the $(c)$ scenario models the case where \A is well aware of the message that is encrypted, but the message is not constructed by \A himself. Giving \A the ability to choose the challenge messages for the IND game models the worst case that might happen: \A knows that the ciphertext he receives is the encryption of one out of the two messages that he can distinguish best. This closely reflects the intuition behind the classical IND notions: in that game, the adversary is allowed to send the two messages not because in the real world he would be allowed to do so, but because we want to achieve security even for the best possible choice of messages from the adversary's perspective. Hence, the $(c)$ model is a valid alternative. Will further discuss the difference between these two models later.

\subheading{Relaying of plaintext states: Yes $(Y)$ vs. No $(n)$.} If \C is {\em not relaying} $(n)$, this means that the two plaintext states chosen by \A will not be `sent back' to \A (in other words: their registers will not be available anymore to \A after the challenge encryption). In circuit terms, this means that at the beginning of the game, \C will have (one or two) ancilla registers in his internal (private) memory. During the encryption phase, \C will swap these register(s) with the content of the original plaintext register(s), hence transferring their original content outside of \A's control.

If the challenger is relaying $(Y)$ instead, this means that the two plaintext states will be left in the original register (or channel), and may be accessed by \A at any moment. This is the model considered for fqIND.

Again, the $(Y)$ case is more fitting to those cases where \A `implements locally' the encryption device and has almost full control of it, whereas the $(n)$ case is more appropriate when the game is played against some external entity which is not under \A's control. This is a rather natural assumption, for example, when states are sent over some quantum channel and not returned. We stress that this distinction in relaying is not trivial: it is not possible for \A, in general, to simulate relaying by keeping internal states entangled with the plaintexts. As an example, consider the attack in Theorem~\ref{thm:BZatt}: it is easy to see that this cannot be performed without relaying.

\subheading{Type of unitary transformation: $(1)$ vs. $(2)$.} In quantum computing, the `canonical' way of evaluating a function $f(x)$ in superposition is by using an auxiliary register:
$$
\sum_{x,y} \alpha_{x,y} \ket{x,y} \mapsto \sum_{x,y}  \alpha_{x,y} \ket{x,y \xor f(x)}.
$$
This way ensures that the resulting operator is invertible, even if $f$ is not. We call this {\em type-$(1)$ transformations}: if $\Enc_k$ is an encryption mapping $m$-bit plaintexts to $\ell$-bit ciphertexts, the resulting operator in this case will act on $m+\ell$ qubits in the following way:
$$
\sum_{x,y} \alpha_{x,y} \ket{x,y} \mapsto \sum_{x,y} \alpha_{x,y} \ket{x,y \xor \Enc_k(x)},
$$
where the $y$'s are ancillary values. This approach is also used for fqIND.

In our case, though, we do not consider arbitrary functions, but encryptions, which act as {\em bijections} on some bit-string spaces (assuming that the randomness is treated as an input.) Therefore, provided that the encryption does not change the size of a message, the following transformation is also invertible:
\begin{equation}\label{eq:UEnc2}
\sum_{x} \alpha_{x} \ket{x} \mapsto \sum_{x} \alpha_{x} \ket{\Enc_k(x)}.
\end{equation}
For the more general case of arbitrary message expansion factors, we will consider transformations of the form:
$$
\sum_{x,y} \alpha_{x,y} \ket{x,y} \mapsto \sum_{x,y} \alpha_{x,y} \ket{\varphi_{x,y}},
$$
where the length of the ancilla register is $|y|\!=\!|\Enc_k(x)|-|x|$ and ${\varphi_{x,0}\!= \Enc_k(x)}$ for every $x$ -- i.e., initializing the ancilla $y$ register in the $\ket{0}$ state produces a correct encryption, which is what we expect from an honest quantum executor. One might ask what happens if the ancilla is not initialized to $0$, and we leave the general case of arbitrary ancillas manipulation as an interesting open problem, but we stress the fact that this behavior is not considered in the case of honest parties. We call these {\em type-$(2)$ transformations}\footnote{These are called {\em minimal quantum oracles} in~\cite{KKVB02}.}.

Notice that, in general, type-$(1)$ and type-$(2)$ transformations are very different: having quantum oracle access to a type-$(2)$ unitary $U^{(2)}_\Enc$ and its adjoint also gives access to the related type-$(2)$ {\em decryption oracle} $U^{(2)}_\Dec : \sum_x \alpha_x \ket{\Enc_k(x)} \mapsto \sum_x \alpha_x \ket{x}$. In fact, notice that $(U^{(2)}_\Enc)^\dagger = U^{(2)}_\Dec$, while the adjoint of a type-$(1)$ encryption operator, $(U^{(1)}_\Enc)^\dagger$, is generally {\em not} a type-$(1)$ decryption operator. In particular, type-$(2)$ operators are `more powerful' in the sense that knowledge of the secret key is required in order to build any efficient quantum circuit implementing them. However, we stress the fact that whenever access to a decryption oracle is allowed, the two models are completely equivalent, because then we can simulate a type-(2) operator by using ancilla qubits and `uncomputing' the resulting garbage lines (see Figure~\ref{fig:equiv12}) (as we will see, this will be the case for the challenger in our qIND notion).
\begin{figure}[h]
\center
 \includegraphics[width=\textwidth,keepaspectratio]{./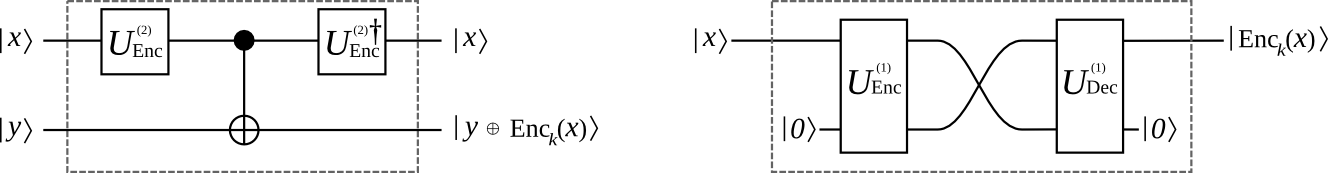}
 \caption{Equivalence between type-(1) and type-(2) in the case of \mbox{$1$-qubit} messages. Left: building a \mbox{type-(1)} encryption oracle by using a \mbox{type-(2)} encryption oracle (and its inverse) as a \mbox{black-box}. Right: building a \mbox{type-(2)} encryption oracle by using \mbox{type-(1)} encryption and decryption oracles as \mbox{black-boxes}.\label{fig:equiv12}}
\end{figure}

\subsection{Analysis of the models}\label{analysis}
By considering these 4 distinctions in the security tree we have $2^4=16$ possible candidate models to analyze. We label each of these candidate models by appending each one of the $4$ labels of every tree branch in brackets. Clearly, 16 different definitions of quantum indistinguishability is too much, but luckily most of these are unreasonable or unachievable. To start with, we can ignore the following:

\subheading{Leaves of the form $(\oracle c \ldots)$.} In the \oracle scenario, the oracle is actually a quantum gate inside \A's quantum circuitry. Therefore \A has the capability of querying the oracle on states which are possibly entangled with other registers kept by \A itself.

\subheading{Leaves of the form $(\oracle Q n \ldots)$.} Again, the oracle is a gate which has no internal memory to store and keep the plaintext states sent by \A.

\subheading{Leaves of the form $(\ldots Y 2 )$.} Relaying is not taken into account in type-$(2)$ transformations. In these transformations, to some extent, one of the two plaintext registers is {\em always} relayed (after having been `transformed' into a ciphertext). If the other plaintext was to be relayed as well, this would immediately compromise indistinguishability (because one of the two states would be modified and the other not, and both of them would be handed over to \A).

\label{finalists}Excluding these options leaves us with 7 models, but it is easy to see that 3 of them are unachievable because of the attack from Theorem~\ref{thm:BZatt}. This is the case for $(\oracle Q Y 1)$ (which is exactly fqIND-qCPA), $(\C Q Y 1)$, and $(\C c Y 1)$. Of the remaining 4, notice that $(\C Q n 1)$ and $(\C c n 1)$ are equivalent to the IND-qCPA notion from \cite{Boneh2013}. The reason is that from \A's perspective, a non-relaying \C is indistinguishable from a \C tracing out (measuring) the plaintext register (otherwise \A and \C could communicate faster than light). This measuring operation would make the ciphertext collapse into a single (classical) ciphertext. And since tracing out the challenge register and applying the type-(1) operator $U^{(1)}_{\Enc}$ commute, one can consider (without loss of generality) the case that \A himself first measures the plaintext register, and then initiates a classical IND query with \C , therefore recovering a classical definition of IND challenge query\footnote{However, we stress that this interpretation is not entirely correct. In fact, one might consider composition scenarios where the IND query is just an intermediate step, and the plaintext and ciphertext registers are reunited at some later step. In such scenarios, not relaying would not be equivalent to measuring. We ignore such considerations in this work, and leave the general case of composable security as an interesting open question.}. Therefore, using any of $(\C Q n 1)$ or $(\C c n 1)$ would lead to a weaker notion of quantum indistinguishability. Since we are interested in achieving stronger notions, we will hence consider the more challenging  scenarios $(\C Q n 2)$ and $(\C c n 2)$.

This argument also leads to the following interesting observation. Ultimately, whether a challenger (or encryption device) performs type-(1) or type-(2) operations depends on its {\em architecture} which we cannot say anything about - we will focus on the $(\ldots 2)$ models in order to be on the `safe side', as they lead to security notions which are harder to achieve. In order to design a secure encryption device, it is good advice to avoid the possibility that it can be accessed in type-(2) mode. For such a device, it would be sufficient to provide IND-qCPA security, which is weaker and therefore easier to achieve. Clearly, providing guidelines on how to construct encryption devices resilient to type-(2) access lies outside the scope of this work.

\subsection{qIND}\label{sec:qIND}
At this point we are left with only two candidate notions: $(\C c n 2)$ and $(\C Q n 2)$. 
From now on we will denote them as {\em `quantum indistinguishability of ciphertexts' (qIND)} and {\em `general quantum indistinguishability of ciphertexts' (gqIND)} resp., and we summarize the resulting challenge phases as follows.

\subheading{Quantum IND challenge phase (qIND):} \A chooses two quantum states $\rho_0,\rho_1$ having efficient (poly-sized) classical descriptions, and sends to \C a challenge template consisting of these two classical descriptions according to Definition~\ref{def:classrepr}. \C samples a bit $b$ and replies to \A with the state obtained by applying the type-(2) operator $U^{(2)}_{\Enc_k}$ as defined in~\eqref{eq:UEnc2} to $\rho_b$. \A's goal is to output $b$.

\subheading{General Quantum IND challenge phase (gqIND):} \A chooses two quantum states $\rho_0,\rho_1$, and sends them to \C . \C samples a bit $b$, discards (traces out) $\rho_{1-b}$, and replies to \A with the state obtained by applying the type-(2) operator $U^{(2)}_{\Enc_k}$ as defined in~\eqref{eq:UEnc2} to $\rho_b$. \A's goal is to output $b$.

Using these challenge phases and the notion of a qCPA learning phase, we define qIND-qCPA and gqIND-qCPA as follows.

\begin{definition}[qIND-qCPA]\label{def:qIND}
A secret-key encryption scheme is said to be qIND-qCPA secure 
if the success probability of any quantum probabilistic polynomial time adversary winning the game defined by qCPA learning phases and the qIND challenge phase above is at most negligibly close (in $n$) to $1/2$.
\end{definition}

\begin{definition}[gqIND-qCPA]\label{def:gqIND}
A secret-key encryption scheme is said to be gqIND-qCPA secure
if the success probability of any quantum probabilistic polynomial time adversary winning the game defined by qCPA learning phases and the gqIND challenge phase above is at most negligibly close (in $n$) to $1/2$.
\end{definition}

Since we mainly consider type-(2) transformations from now on, we will overload notation and also use $U_{\Enc_k}$ to denote the type-(2) encryption operator.

\begin{theorem}[gqIND-qCPA $\Rightarrow$ qIND-qCPA]\label{theo:qind-gqind}
Let $\E$ be a gqIND-qCPA secure symmetric-key encryption scheme. Then $\E$ is also qIND-qCPA secure.
\end{theorem}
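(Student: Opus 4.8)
The plan is to prove the implication by a direct reduction, exploiting the fact that the qIND challenge phase is a \emph{restriction} of the gqIND challenge phase. The only difference between the two notions lies in how the challenge plaintexts are supplied: in gqIND the adversary hands over arbitrary (possibly entangled) quantum states $\rho_0,\rho_1$, whereas in qIND the adversary is confined to sending \emph{classical descriptions} of $\rho_0,\rho_1$ in the sense of Definition~\ref{def:classrepr}. As remarked after that definition, a state admitting a classical description cannot be entangled with any other system, so the class of admissible qIND challenges is precisely the subset of gqIND challenges consisting of efficiently preparable states that carry no entanglement with the adversary. The qCPA learning phases and the form of the challenge reply, namely $U_{\Enc_k}$ applied to $\rho_b$, are identical in both games. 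Hence I expect every qIND-qCPA adversary to be simulable by a gqIND-qCPA adversary achieving exactly the same winning probability.

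Concretely, I would argue the contrapositive. Suppose \A is a quantum \ppttxt adversary winning the qIND-qCPA game with probability non-negligibly bounded away from $1/2$. I construct a gqIND-qCPA adversary $\mathcal{B}$ that runs \A internally. During each qCPA learning phase, $\mathcal{B}$ simply relays \A's superposition queries to its own encryption oracle and hands back the answers; since both games provide identical qCPA access to $U_{\Enc_k}$ through the external challenger (the $(\C)$ model), this relaying is perfect. When \A outputs its challenge template, consisting of classical descriptions of $\rho_0$ and $\rho_1$, the reduction $\mathcal{B}$ uses these poly-sized descriptions to efficiently prepare the actual states $\rho_0$ and $\rho_1$ and forwards \emph{both} of them to its own gqIND challenger \C.

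The challenger \C then samples $b$, traces out $\rho_{1-b}$, and replies with $U_{\Enc_k}(\rho_b)$; the reduction $\mathcal{B}$ passes this state back to \A unchanged. From \A's point of view this is exactly the reply expected in the qIND game -- the tracing out of $\rho_{1-b}$ happens inside \C and is invisible to \A -- so \A's entire view inside $\mathcal{B}$ is distributed identically to its view in the real qIND-qCPA game. Finally $\mathcal{B}$ outputs whichever bit $b'$ that \A outputs. Consequently the success probability of $\mathcal{B}$ in the gqIND-qCPA game equals that of \A in the qIND-qCPA game, so a non-negligible qIND advantage yields a non-negligible gqIND advantage, contradicting the assumed gqIND-qCPA security of \E.

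I do not anticipate a substantial obstacle: essentially all the content is in the observation that classical descriptions yield precisely those gqIND challenge states that $\mathcal{B}$ can reconstruct and that carry no entanglement with the adversary. The only point demanding care is checking that the simulation is \emph{perfect}, i.e., that the learning phases and the challenge response coincide in distribution across the two games, so that \A's advantage transfers to $\mathcal{B}$ without any loss.
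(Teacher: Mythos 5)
Your proposal is correct and matches the paper's argument: the paper justifies this theorem with the single observation that states admitting an efficient classical description are a special case of arbitrary quantum plaintext states, which is precisely the reduction you spell out (the gqIND adversary prepares the states from \A's descriptions, forwards them, and relays everything else). Your more explicit write-up, including the check that the simulation is perfect because states with classical descriptions carry no entanglement with \A, is a faithful formalization of that one-line reasoning rather than a different route.
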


\label{cmodel}The reason is that quantum states admitting an efficient classical description (used in qIND) are just a special case of arbitrary quantum plaintext states (used in gqIND). Despite this implication, we will mainly focus on the qIND notion in the following, and we will use the gqIND notion only as a comparison to other existing notions. 
The main reason for this choice is that in the context of classical encryption schemes resistant to superposition quantum access, we believe that it is important to not lose focus of what the capabilities of a `reasonable' adversary should be. Namely, recall the following classical IND argument: {\em allowing the adversary to send plaintexts to the challenger is equivalent to the fact that indistinguishability must hold even for the most favorable case from the adversary's perspective}. Such an argument does {\em not} hold anymore quantumly. In fact, the $(Q)$ model considered in gqIND presents the following issues:
\begin{itemize}
\item it allows entanglement between the adversary and the challenger: \A could prepare a state of the form $\rho_{AB} = \frac{1}{\sqrt{2}}\ket{00}+\frac{1}{\sqrt{2}}\ket{11}$, sending $\rho_A$ as a plaintext but keeping $\rho_B$;
\item it allows the adversary to create certain non-reproduceable states. For example, consider the state $\ket{\psi} = \sum_{x \in X} \frac{1}{\sqrt{|X|}} \ket{x,h(x)}$, where $h$ is a collision-resistant hash function. \A could measure the second register, obtaining a random outcome $y$, and knowing therefore that the remaining state is the superposition of the preimages of $y$, $\ket{\psi_y} = \sum_{x \in X :h(x)=y} \frac{1}{\sqrt{|\set{x \in X:h(x)=y}|}} \ket{x}$. \A could then use $\ket{\psi_y}$ as a plaintext in the challenge phase, but note that $\A$ cannot reproduce $\ket{\psi_y}$ for a given value $y$.
\end{itemize}
Both of the above examples are not reasonable in our scenario. Entanglement between \A and \C represents a sort of `quantum watermarking' of messages, which goes beyond what a meaningful notion of indistinguishability should achieve. Knowledge of intermediate, unpredictable measurements also renders \A too powerful, because it gives \A access to information not available to \C itself - e.g., in the example above \C would not even know the value of $y$. As it is \C who prepares the state to be encrypted, it is reasonable to assume that it is \C who should know these intermediate measurements, not \A. In the example above, what \A could see instead (provided he knows the circuit generating the state, as we assume in qIND) is that the plaintext is a mixture $\Psi=\sum_y \psi_y$ for all possible values of $y$.

The possibility offered by gqIND of allowing the adversary to play the IND game with arbitrary states is certainly elegant from a theoretical point of view, but from the perspective of the quantum security of the kind of schemes we are considering, it is too broad in scope. The $(c)$ model used in qIND, on the other hand, inherently provides guidelines and reasonable limitations on what a quantum adversary can or cannot do. Also, qIND is often easier to deal with: notice that in the $(c)$ model, unlike in the $(Q)$ model, \A always receives back an unentangled state from a challenge query. In security reductions, this means that we can more easily simulate the challenger, and that we do not have to take care of measures of entanglement when analyzing the properties of quantum states - for example, indistinguishability of states can be shown by only resorting to the {\em trace norm} instead of the more general {\em diamond norm}.

Furthermore, it is important to notice that all our new results in Section~\ref{sec:constrimp} are unaffected by the choice of either qIND or gqIND. Our impossibility result from Theorem~\ref{thm:impossibility} holds for qIND, and hence also for gqIND because of Theorem~\ref{theo:qind-gqind}. On the other hand, the security proof of Construction~\ref{const:extension} (Theorem~\ref{thm:extension}) is given for gqIND, and holds therefore also for qIND. In fact, it remains unclear whether a separation between qIND and gqIND can be found at all in the realm of classical encryption schemes. We leave this as an interesting open question.

\label{BJcomparison}Finally, we note that the q-IND-CPA-2 indistinguishability notion for secret-key encryption of quantum messages introduced by Broadbent and Jeffery~\cite[Appendix~B]{BJ15} resembles our gqIND notion, and it is in fact equivalent to it in the case that the encryption operation is a symmetric-key classical functionality operating in type-(2) mode.

\begin{theorem}[gqIND-qCPA $\Leftrightarrow$ q-IND-CPA-2]\label{theo:gqind-qindcpa2}
Let $\E$ be a symmetric-key encryption scheme. Then $\E$ is gqIND-qCPA secure iff $\E$ is q-IND-CPA-2 secure.
\end{theorem}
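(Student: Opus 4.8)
The plan is to unfold the two security games and show that they coincide once the Broadbent--Jeffery encryption map is specialized to our setting. The crucial translation step is the observation that the general quantum (CPTP) encryption map underlying the q-IND-CPA-2 definition collapses, for a classical symmetric-key functionality evaluated in type-(2) mode, to exactly the unitary $U_{\Enc_k}$ of~\eqref{eq:UEnc2}, with fresh randomness $r$ sampled per query as prescribed in Definition~\ref{def:quantumEncryptionOracle}. Since $\E$ is by assumption a classical symmetric-key scheme and gqIND-qCPA already evaluates it in type-(2) mode, this is precisely the regime in which the two frameworks become comparable. I would first carry out this identification explicitly and then check that the learning phases agree: both games grant quantum oracle access to $U_{\Enc_k}$ in a phase before and a phase after the challenge, so the qCPA learning of gqIND-qCPA matches the two-sided oracle access of the q-IND-CPA-2 game verbatim.

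The only substantive gap is the shape of the challenge phase. Our gqIND challenge is \emph{left-or-right} --- the challenger encrypts one of two adversarially chosen states $\rho_0,\rho_1$ --- whereas the Broadbent--Jeffery challenge is \emph{real-or-fixed}, with the challenger either encrypting the adversary's message $\rho$ or encrypting a fixed reference state such as $\proj{0^m}$. The plan is to bridge these with the standard hybrid argument, in two reductions. To show that gqIND-qCPA security implies q-IND-CPA-2 security, I observe that a real-or-fixed distinguisher is literally a special case of a left-or-right one (take $\rho_0=\rho$ and $\rho_1=\proj{0^m}$), so any successful q-IND-CPA-2 adversary yields a gqIND-qCPA adversary of the same advantage; in both cases the possibly-entangled reference register kept by \A is treated identically. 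For the converse, a gqIND adversary distinguishing $U_{\Enc_k}(\rho_0)$ from $U_{\Enc_k}(\rho_1)$ is turned into a q-IND-CPA-2 adversary by inserting the encryption of $\proj{0^m}$ as an intermediate hybrid: by the triangle inequality a non-negligible gqIND advantage forces a non-negligible real-versus-fixed advantage for at least one of $\rho_0,\rho_1$, losing only a factor of two, which is irrelevant for negligibility.

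I expect the main obstacle to be the careful reconciliation of the two formalisms rather than any deep argument. In particular I would have to confirm that the BJ15 encryption, originally defined for arbitrary quantum plaintexts carrying an auxiliary reference register possibly entangled with the challenge, specializes faithfully: the reference register must be handled identically (never touched by $U_{\Enc_k}$ and returned to \A in both games), the per-query fresh randomness of the classical functionality must match BJ15's randomness convention for the CPA oracle, and the dimensions must line up when $\E$ is not length-preserving, so that the fixed state $\proj{0^m}$ lives in the correct space and the type-(2) operator remains well defined. Once this bookkeeping is settled the equivalence follows, and I would close by stressing that it is exactly the restriction to classical type-(2) functionalities that makes the comparison possible, since for genuinely quantum encryption circuits the two notions need not agree.
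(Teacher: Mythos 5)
Your translation step---that the Broadbent--Jeffery CPTP encryption map, specialized to a classical symmetric-key functionality evaluated in type-(2) mode, is exactly the unitary $U_{\Enc_k}$ of~\eqref{eq:UEnc2}, with matching learning phases---is precisely the paper's argument. But you have misread which of the BJ15 notions is which: in~\cite{BJ15} the \emph{real-or-fixed} game (encrypt $\rho$ versus encrypt $\ketbra{0}{0}$) is called q-IND-CPA, whereas q-IND-CPA-2, the notion actually named in the theorem, is already the \emph{left-or-right} game: the adversary submits two states $\rho^0,\rho^1$, the challenger flips $b$, traces out $\rho^{1-b}$, and returns the encryption of $\rho^b$. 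Consequently the ``substantive gap'' you set out to bridge does not exist for the stated theorem: once the encryption map is specialized to type-(2) mode, the q-IND-CPA-2 game and the gqIND challenge phase of Definition~\ref{def:gqIND} are syntactically the same game, and the paper's proof ends right there with that observation.

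Your hybrid argument through the fixed state is not wrong mathematics---it is essentially a reproof of~\cite[Theorem~B.2]{BJ15}, the equivalence between q-IND-CPA and q-IND-CPA-2, which the paper cites as background rather than reproves. Read charitably, your route establishes the chain gqIND-qCPA $\Leftrightarrow$ q-IND-CPA $\Leftrightarrow$ q-IND-CPA-2, with a factor-two loss that is indeed harmless. But as written, both of your reductions target the real-or-fixed game, i.e.\ a different notion from the one in the theorem statement, so the proposal is not yet a proof of the claim. To repair it you should either correct the definition of q-IND-CPA-2 (after which your two reductions collapse into the paper's one-line identification and the hybrid becomes superfluous), or keep your argument but state explicitly that it proves equivalence with q-IND-CPA and then invoke the cited BJ15 equivalence as the final link.
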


A proof of the above theorem can be found in Appendix~\ref{app:BJ15equiv}. A generalization of q-IND-CPA-2 to arbitrary quantum encryption schemes, together with equivalent notions of quantum semantic security, was given and analized in~\cite{Alagic+16arxiv}. All these security notions are given in the context of `fully quantum encryption', in the sense that the encryption schemes considered in~\cite{BJ15} and~\cite{Alagic+16arxiv} are arbitrary quantum circuits acting natively on quantum data, while in this work we consider the quantum security of classical encryption schemes. The fully quantum homomorphic schemes which are shown to be secure in~\cite{BJ15}, and the other quantum encryption schemes shown to be secure in~\cite{Alagic+16arxiv}, do not fall into the category of \emph{classical} encryption schemes which we are studying here. On the other hand, as Theorem~\ref{thm:extension} shows, our Construction~\ref{const:extension} is the first known example of a classical symmetric-key encryption scheme which is secure even against these kinds of `fully quantum' security notions.

\section{New Notions of Quantum Semantic Security}

In this section, we initiate the study of suitable definitions of semantic security in the quantum world. As in the classical case, we are particularly interested in notions that can be proven equivalent to some version of quantum indistinguishability. So these definitions actually describe the {\em semantics} of the equivalent IND notions. As in the classical case, we present these notions in the non-uniform model of computation.

Working towards a quantum SEM notion, we restrict our analysis to the SEM challenge phase. For the learning phase, we stick to the `qCPA learning phase', as in Definition~\ref{def:quantumEncryptionOracle}, where the adversary has access to a quantum encryption oracle. In the end, we give a definition for quantum semantic security under quantum chosen-plaintext attacks (qSEM-qCPA) which we later prove equivalent to qIND-qCPA, thereby adding semantics to our qIND-qCPA notion.

\subsection{Classical Semantic Security under Quantum CPA}\label{qasemquestion}

As a first notion of semantic security in the quantum world, we consider what happens if, like in the IND-qCPA notion, we stick to the classical definition but we allow for a quantum chosen-plaintext-attack phase. The definition uses a SEM-qCPA game that is obtained by combining qCPA learning phases with a classical SEM challenge phase as defined in Section~\ref{sec:preliminaries}. As in the classical case, \A's success probability is compared to that of a simulator \sim that plays in a reduced game: \sim gets no learning phase and during the challenge phase it only receives the advice $h_m(x)$, not the ciphertext. 

\begin{definition}[SEM-qCPA]\label{def:semqcpa} A secret-key encryption scheme is called SEM-qCPA-secure if for every quantum polynomial-time machine \A, there exists a quantum polynomial-time machine \sim such that the challenge templates produced by \sim and \A are identically distributed and the success probability of \A winning the game defined by qCPA learning phases and a SEM challenge phase is negligibly close (in $n$) to the success probability of \sim winning the reduced game. 
\end{definition}

\subheading{Spoiler.} 
It is easy to see that the SEM-qCPA notion of semantic security is equivalent to IND-qCPA, see Theorem~\ref{theo:qcpa}.

\vspace{0.2in}

In Appendix~\ref{app:qaSEM} we discuss what happens if one also allows quantum advice states in this scenario, and why this option would not add anything meaningful.

\subsection{Quantum Semantic Security}\label{sec:qsem}

Here we define {\em quantum semantic security under chosen-plaintext attacks}~(qSEM-qCPA). As in the classical case, we want the definition of semantic security to formally capture what we intuitively understand as a strong security notion. In the quantum case, there are several choices to be made. We start by giving our formal definition of quantum semantic security, and justify our choices afterwards.

\subheading{Quantum SEM (qSEM) challenge phase:} \A sends to \C a challenge template consisting of classical decriptions of 
\begin{itemize}
\item a quantum circuit $G_m$ taking $\poly{n}$-bit classical input and outputting $m$-qubit plaintext states,
\item a quantum circuit $h_m$ taking $m$-qubit plaintexts as input and outputting $\poly{n}$-qubit advice states,
\item a quantum circuit $f_m$ taking $m$-qubit plaintexts as input and outputting $\poly{n}$-qubit target states.
\end{itemize}

The challenger \C samples $y \rand \bin^\poly{n}$ and computes two copies of the plaintext $\rho_y = G_m(y)$. One is used to compute auxiliary information $h_m(\rho_y)$ and one to compute the ciphertext $U_{\Enc_k} \, \rho_y \, U_{\Enc_k}^\dag$. \C then replies with the pair $\left( U_{\Enc_k} \, \rho_y \, U_{\Enc_k}^\dag, h_m(\rho_y) \right)$. \A's goal is to output $f_m(\rho_y)$. We say that $\A$ {\em wins the qSEM-qCPA game} if no quantum polynomial-time distinguisher can distinguish \A's output from the target state $f_m(\rho_y)$ with non-negligible advantage.

In the reduced game, \sim receives no encryption, but only the auxiliary information $h_m(\rho_y)$ from \C. Analogously to the above case, \sim {\em wins the qSEM-qCPA game} if no quantum polynomial-time distinguisher can distinguish \sim's output from the target state $f_m(\rho_y)$ with non-negligible advantage.

\begin{definition}[qSEM-qCPA]\label{def:qsemqcpa} A secret-key encryption scheme is called qSEM-qCPA-secure if for every quantum polynomial-time machine \A, there exists a quantum polynomial-time machine \sim such that the challenge templates produced by \sim and \A are identically distributed and the success probability of \A winning the game defined by qCPA learning phases and a qSEM challenge phase is negligibly close (in $n$) to the success probability of \sim winning the reduced game. 
\end{definition}

When defining quantum semantic security, we have to deal with several issues: First, we have to define how the plaintext  distribution is described. In the classical definition, the distribution is produced by a (classical) circuit $G_m$ running on uniform input bits. We take the same approach here, but let $G_m$ output $m$-qubit plaintexts.

The second question is how to define the advice function. While the input should be the plaintext quantum state $\rho_y$, the output could be either quantum or classical. We decided to allow quantum advice as it leads to a more general model and it includes classical outputs as a special case. In order for the challenger to compute both the encryption of the plaintext state $\rho_y$ and the advice state $h_m(\rho_y)$ without violation of the no-cloning theorem, we exploit how we generate the message state. We simply run $S_m$ twice on the same classical randomness $y$ to generate two copies of the plaintext state $\rho_y$. Another option would have been to allow for entanglement between the plaintext message $\rho_y$ and the advice state $h_m(\rho_y)$. Allowing such entanglement would model side-channel information the attacker could obtain, for instance by learning the content of some internal register of the attacked device. However, the resulting notion would not be equivalent with qIND-qCPA anymore, because in qIND-qCPA, the challenge plaintexts are provided by their classical descriptions and can therefore not be entangled with the attacker.

Third, we have chosen to model the target function $f_m$ in the same way as the advice function $h_m$, i.e.\ we allow arbitrary quantum circuits that might output quantum states. The reasoning behind allowing quantum output is again to use the strongest possible, most general model. Allowing quantum output however leads to the problem that, in general, we cannot physically test anymore if an adversary \A outputs exactly the result of the target function $f_m(\rho_y)$. One option would be to require \A's output to be close to $f_m(\rho_y)$ in terms of their trace distance. But two quantum states can be quantum-polynomial-time indistinguishable even if their trace distance is large\footnote{Think of two different classical ciphertexts which are encrypted using a quantum-computationally secure encryption scheme. Then, the ciphertext states are orthogonal (and hence their trace distance is maximal), but they are computationally indistinguishable.}. Since we are only interested in computational security notions, we solve this problem by requiring QPT indistinguishability as success condition for winning the SEM game.

\subheading{Spoiler.} 
Our qSEM-qCPA notion of semantic security is equivalent to qIND-qCPA, and unachievable for those schemes which leave the size of the message unchanged (like most block ciphers), see Section~\ref{imposs}.

\section{Relations}\label{relations}

In this section we show relations between our new notions of indistinguishability and semantic security in the quantum world. It is already known~\cite{GM84,Goldreich2004} that classically, IND-CPA and semantic security are equivalent. Our goal is to show a similar equivalence for our new notions, plus to show a hierarchy of equivalent security notions. Our results are summarized in Figure \ref{fig:relations}.

\begin{figure}[h]
 \includegraphics[width=\textwidth,keepaspectratio]{./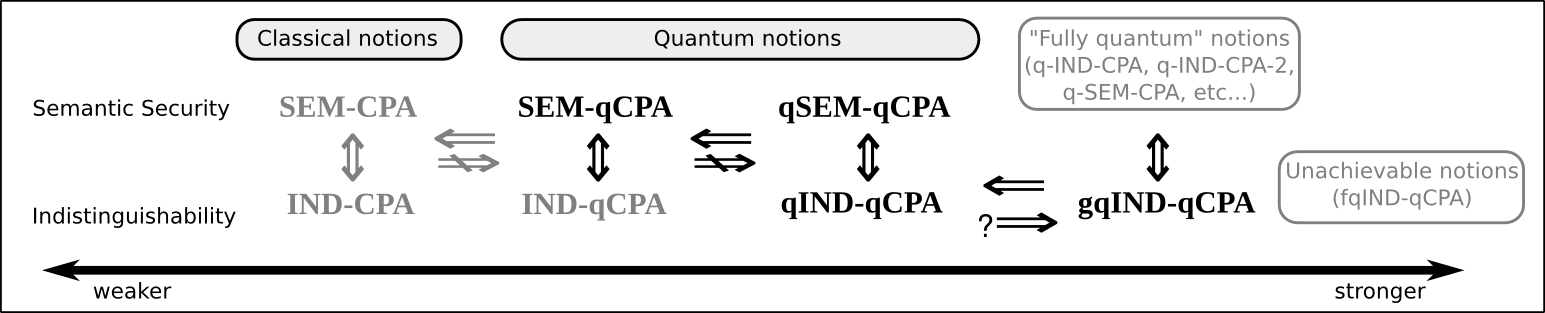}
 \caption{The relations between notions of indistinguishability and semantic security in the quantum world (previously known results in gray.)\label{fig:relations}}
\end{figure}

We start by proving equivalence between IND-qCPA and SEM-qCPA.

\begin{theorem}[IND-qCPA $\Leftrightarrow$ SEM-qCPA]\label{theo:qcpa}
Let $\E$ be a symmetric-key encryption scheme. Then $\E$ is IND-qCPA secure iff $\E$ is SEM-qCPA secure.
\end{theorem}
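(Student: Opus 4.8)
The plan is to adapt the classical equivalence of IND-CPA and SEM-CPA (e.g.\ \cite{Goldreich2004}) to the present setting, exploiting that IND-qCPA and SEM-qCPA share \emph{exactly} the same qCPA learning phase and an entirely \emph{classical} challenge phase: in the SEM challenge the template $(S_m,h_m,f_m)$, the sampled plaintext $x$, the advice $h_m(x)$, the target $f_m(x)$, and the adversary's output are all classical strings, and likewise the IND challenge is classical. Hence the only quantum ingredient --- superposition access to $U_{\Enc_k}$ --- appears identically on both sides, and the classical proof structure lifts once we verify that the reductions can service this oracle. I would prove the two implications separately.

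For SEM-qCPA $\Rightarrow$ IND-qCPA I would argue by contraposition. Assuming a qCPA adversary \A breaks IND-qCPA with two equal-length messages $x_0,x_1$, I build a SEM-qCPA adversary that relays \A's learning queries and then outputs the fixed challenge template where $S_m$ puts mass $1/2$ on each of $x_0,x_1$, where $h_m$ is constant, and where $f_m(x_b)=b$ is the index of the sampled message. Upon receiving $(\Enc_k(x_b),h_m(x_b))$ it runs \A on the ciphertext to recover $b=f_m(x_b)$, succeeding with \A's non-negligible IND advantage over $1/2$. Any simulator \sim, however, sees only the constant advice, which is independent of the uniform bit $b$, so it wins with probability at most $1/2$; since the template is deterministic it must coincide with \A's, and this gives \sim no extra help. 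The resulting non-negligible gap contradicts SEM-qCPA security.

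For IND-qCPA $\Rightarrow$ SEM-qCPA, the harder direction, I would, given any SEM-qCPA adversary \A, exhibit the required simulator \sim and bound the loss by a hybrid argument. The simulator generates its own fresh key $k'\exec\Gen(1^n)$ and answers every one of \A's qCPA learning queries by implementing $U_{\Enc_{k'}}$ itself --- this is efficient because \Enc is polynomial-time, and it reproduces \A's learning phase exactly since $k'$ is distributed as the real key. After \A outputs $(S_m,h_m,f_m)$ (so \sim's template matches \A's by construction) and \sim receives the genuine advice $h_m(x)$ from \C, the simulator hands \A the pair $(\Enc_{k'}(0^m),h_m(x))$ --- a \emph{dummy} ciphertext --- and forwards \A's guess. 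To show this is nearly as good as a real ciphertext, I would build an IND-qCPA adversary \B that runs \A, relays its learning queries to \B's own oracle, samples $x\from S_m$, submits the equal-length IND template $(x,0^m)$, feeds $(\Enc_k(x_b),h_m(x))$ to \A, and guesses the challenge bit by testing whether \A's output equals the classically computable value $f_m(x)$. Then \B's advantage equals $\tfrac12$ times the difference between \A's success probability on a real encryption of $x$ and on a dummy encryption; by IND-qCPA security this is negligible, so \sim (which serves the dummy case) succeeds negligibly close to \A.

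The main obstacle lies entirely in this last direction, and it is the faithful simulation of the \emph{quantum} learning phase rather than the classical challenge bookkeeping: I must ensure that both \sim and the reduction \B can service \A's superposition encryption queries. The simulator does so using its self-generated key by building the circuit $U_{\Enc_{k'}}$, while \B does so by relaying to the oracle it is granted in the IND-qCPA game; in both cases the learning phase offered to \A is distributed precisely as in the real SEM-qCPA experiment, so no advantage leaks there. The remaining points --- equal message lengths for the IND query, the fixed-template argument pinning \sim's success at $1/2$ in the first direction, and \B's ability to compute $f_m(x)$ from its own sample $x$ --- are routine.
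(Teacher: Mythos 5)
Your proposal is correct and follows essentially the same route as the paper: for SEM-qCPA $\Rightarrow$ IND-qCPA you build the same SEM template (uniform distribution over $\{x_0,x_1\}$, constant advice, distinguishing target function) while relaying the quantum learning queries, and for IND-qCPA $\Rightarrow$ SEM-qCPA you construct the same simulator that self-generates a key, emulates $U_{\Enc_{k'}}$ for the quantum learning phase, and substitutes a dummy encryption of a fixed string (the paper uses $1\ldots1$ rather than $0^m$), with the loss bounded by a reduction to ciphertext indistinguishability. Your write-up merely makes explicit the intermediate IND-qCPA adversary $\mathcal{B}$ that the paper's sketch leaves implicit in the phrase ``it follows from the indistinguishability of encryptions.''
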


We split the proof of Theorem~\ref{theo:qcpa} into two propositions -- one per direction. They closely follow the proofs for the classical case (see~\cite[Proof of Th.~5.4.11]{Goldreich2004}), we recall them as they work as guidelines for the following proofs.

\begin{proposition}[IND-qCPA $\Rightarrow$ SEM-qCPA]\label{prop:indq:semq}
\end{proposition}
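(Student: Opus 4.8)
The plan is to construct, for every quantum polynomial-time adversary \A in the SEM-qCPA game, a corresponding simulator \sim, using the assumed IND-qCPA security as the key tool. This mirrors the classical argument in~\cite[Proof of Th.~5.4.11]{Goldreich2004}: the simulator must produce identically distributed challenge templates $(S_m,h_m,f_m)$ and achieve (essentially) the same success probability as \A while being deprived of both the learning phase and the actual ciphertext.

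First I would let \sim, on the same non-uniform advice and randomness as \A, run \A internally. The crucial point is that \sim must answer \A's qCPA learning queries without access to a real encryption oracle. Here I would use the standard trick: since the scheme is IND-qCPA secure, encryptions of a fixed message (say $0^m$) are indistinguishable from encryptions of the true queried messages; but more directly, since \sim is allowed to run \Gen itself, \sim can simply sample its own key $k'\exec\Gen(1^n)$ and answer all of \A's qCPA queries honestly using $U_{\Enc_{k'}}$. This sidesteps the need for a learning phase entirely, because \sim can simulate the entire encryption oracle internally. When \A outputs its challenge template $(S_m,h_m,f_m)$, \sim forwards exactly this template to its own challenger, guaranteeing the identical-distribution requirement by construction.

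Next, in the challenge phase, \A expects to receive $(\Enc_{k'}(x),h_m(x))$ for $x$ sampled from $S_m$, but \sim receives only $h_m(x)$ from its reduced game. The idea is for \sim to feed \A the pair $(\Enc_{k'}(0^m), h_m(x))$, i.e., to replace the real challenge ciphertext by an encryption of a fixed dummy plaintext under \sim's own key, while passing along the genuine advice $h_m(x)$. Then \sim runs \A to completion and outputs whatever \A outputs as its guess for $f_m(x)$. The heart of the argument is to show that \A's success probability in this simulated game is negligibly close to its success probability in the real SEM-qCPA game; the gap is bounded by the IND-qCPA advantage of a distinguisher that internally runs the whole of \A and its simulated learning phase, distinguishing an encryption of $x$ from an encryption of $0^m$. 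If the two success probabilities differed non-negligibly, this distinguisher would break IND-qCPA, contradicting the hypothesis.

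The main obstacle I anticipate is the reduction to the \emph{classical} IND challenge phase. IND-qCPA only provides indistinguishability for classically chosen challenge messages, so the distinguisher must present two classical plaintexts $x$ and $0^m$ to the IND-qCPA challenger; this is fine because $x$ is sampled classically from $S_m$, but care is needed to argue that the distinguisher can sample $x$, compute the classical advice $h_m(x)$, and embed the received challenge ciphertext into its simulation of \A consistently, all while handling \A's qCPA queries through the IND-qCPA learning oracle (rather than through \sim's self-generated key) so that the single key is used coherently across learning and challenge. A secondary subtlety is the success-condition bookkeeping: since the classical SEM notion measures whether \A outputs $f_m(x)$ exactly, I must verify that replacing the ciphertext only perturbs \A's output distribution by the indistinguishability advantage, so that the comparison between \A's and \sim's winning probabilities goes through cleanly.
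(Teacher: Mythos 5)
Your proposal is correct and follows essentially the same route as the paper's proof: the simulator receives \A's circuit as non-uniform advice, generates its own key to emulate the quantum encryption oracle during the learning phases, forwards \A's challenge template unchanged, substitutes the challenge ciphertext with an encryption of a fixed dummy plaintext (the paper uses the all-ones string where you use $0^m$, an immaterial difference), and bounds the gap via a reduction to IND-qCPA with a classical challenge pair. Your added care about routing the qCPA queries through the reduction's oracle and about the classicality of the IND challenge is exactly the bookkeeping the paper's sketch leaves implicit.
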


\begin{proposition}[SEM-qCPA $\Rightarrow$ IND-qCPA]\label{prop:semq:indq}
\end{proposition}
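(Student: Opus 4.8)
\textbf{Proof plan for Proposition~\ref{prop:semq:indq} (SEM-qCPA $\Rightarrow$ IND-qCPA).}

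The plan is to proceed by contraposition: assuming the scheme is not IND-qCPA secure, I will build an adversary that violates SEM-qCPA. So suppose there is a quantum PPT IND-qCPA adversary \A that, after a qCPA learning phase, submits two equal-length challenge messages $x_0,x_1$ and distinguishes $\Enc_k(x_b)$ with non-negligible advantage $\delta$. The goal is to package this distinguishing power into a SEM-qCPA adversary \A' whose success probability cannot be matched by any simulator \sim, thereby contradicting SEM-qCPA security.

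First I would have \A' run \A internally, forwarding \A's qCPA learning queries directly to its own encryption oracle (both notions share the same qCPA learning phase, so this is immediate). When \A produces its challenge pair $(x_0,x_1)$, the adversary \A' builds a SEM challenge template as follows: let $S_m$ be the circuit that outputs a uniformly random one of $x_0,x_1$ — equivalently, sample a bit $b$ and output $x_b$ — let the advice function $h_m$ be trivial (say constant, revealing nothing beyond the distribution), and let the target function $f_m(x_b) = b$ be the bit selecting which of the two messages was chosen. Then \A' receives the genuine challenge ciphertext $\Enc_k(x_b)$ together with $h_m(x)$, hands the ciphertext to the internal \A, and outputs whatever bit \A guesses. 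By construction, \A' wins the SEM game (i.e.\ correctly outputs $f_m(x) = b$) with probability $\tfrac12 + \delta$, non-negligibly better than chance.

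Next I would bound the simulator's success. Any simulator \sim in the reduced game must produce a challenge template identically distributed to \A''s, and then receives only $h_m(x)$ — which carries no information about $b$ since $h_m$ is trivial and $x_0,x_1$ are revealed in the template but the selecting bit $b$ is hidden. Hence \sim must guess $f_m(x) = b$ with the bit $b$ being information-theoretically independent of everything it sees, so \sim succeeds with probability exactly $\tfrac12$. The gap between \A''s success ($\tfrac12+\delta$) and any simulator's success ($\tfrac12$) is non-negligible, contradicting SEM-qCPA security. I expect the main obstacle to be the careful verification of the ``identically distributed challenge templates'' requirement together with the success-condition bookkeeping: since $f_m$ here has classical (bit) output, the QPT-indistinguishability winning condition collapses to ordinary probability of correctly outputting the classical bit, but I must make sure the simulator genuinely learns nothing about $b$ — in particular that encoding $x_0,x_1$ into $S_m$ does not accidentally leak $b$ through the template itself (it does not, since $S_m$ samples $b$ internally). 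This mirrors the classical argument of \cite[Proof of Th.~5.4.11]{Goldreich2004}, the only quantum subtlety being that the learning-phase oracle is now the quantum encryption oracle $U_{\Enc_k}$, which \A' relays transparently.
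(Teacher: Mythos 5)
Your proposal matches the paper's proof essentially step for step: you run the IND-qCPA distinguisher inside a SEM-qCPA adversary, forward its qCPA queries to your own quantum encryption oracle, and convert its challenge pair $(x_0,x_1)$ into the SEM template consisting of the uniform distribution over $\{x_0,x_1\}$, a constant advice function, and the distinguishing target function $f_m(x_0)=0$, $f_m(x_1)=1$, then argue that the constant advice leaves any simulator at probability $1/2$ while your adversary inherits the distinguisher's non-negligible advantage. This is exactly the paper's reduction (following \cite[Proof of Th.~5.4.11]{Goldreich2004}), so the proposal is correct and takes the same route.
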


\begin{proof}[of Proposition~\ref{prop:indq:semq} -- Sketch.]
The idea of the proof is to hand \A's circuit as non-uniform advice to the simulator \sim. \sim runs \A's circuit and impersonates the challenger \C by generating a new key and answering all of \A's queries using this key. When it comes to the challenge query, \sim encrypts the $1\ldots1$ string of the same length as the original message. It follows from the indistinguishability of encryptions that the adversary's success probability in this game must be negligibly close to its success probability in the real semantic-security game, which concludes the proof. The only difference in the -qCPA case is that \A and \sim are quantum circuits, and that \sim has to emulate the quantum encryption oracle instead of a classical one.\qed
\end{proof}

\begin{proof}[of Proposition~\ref{prop:semq:indq}]
We recall here the full proof as it is short. Assume there exists an efficient distinguisher \A against the IND-qCPA security of \E. Then we show how to construct an oracle machine $\MA$ that has access to \A and breaks the SEM-qCPA security of the scheme. $\MA$ runs \A , emulating the quantum encryption oracle by simply forwarding all the qCPA queries to its own oracle. As \A executes an IND challenge query on $m$-bit messages $(x_0,x_1), \MA$ produces the SEM template $(G_m,h_m,f_m)$ with $G_m$ describing the uniform distribution over $\set{x_0,x_1}, h_m=1^n$ (or any other function such that $h_m(x_0)=h_m(x_1)$), and $f_m$ a function that fulfills $f_m(x_0)=0$ and $f_m(x_1)=1$ (i.e., the distinguishing function). Then $\MA$ performs a SEM challenge query with this template, and given challenge ciphertext $c$, uses it to answer \A's query. If, at that point, \A performs more qCPA queries, $\MA$ answers again by forwarding all these queries to its own oracle. Finally, $\MA$ outputs \A's output. As \A distinguishes encryptions of $x_0$ and $x_1$ with non-negligible success probability, \A will return the correct value of $f_m$ with recognizably higher probability than guessing. As $h_m$ is independent of the encrypted message, no simulator can do better than guessing. Hence, \MA has a non-negligible advantage to output the right value of $f_m$. \qed
\end{proof}

Next, we show equivalence between qIND-qCPA and qSEM-qCPA.

\begin{theorem}\label{theo:qequi}[qIND-qCPA $\Leftrightarrow$ qSEM-qCPA]
Let $\E$ be a symmetric-key encryption scheme. Then $\E$ is qIND-qCPA secure iff $\E$ is qSEM-qCPA secure.
\end{theorem}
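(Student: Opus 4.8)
The plan is to mirror the classical equivalence proof of Theorem~\ref{theo:qcpa}, splitting into the two implications qIND-qCPA $\Rightarrow$ qSEM-qCPA and qSEM-qCPA $\Rightarrow$ qIND-qCPA, and adapting each direction to the quantum setting where plaintexts are given by classical descriptions and the advice/target functions output quantum states. For the direction qIND-qCPA $\Rightarrow$ qSEM-qCPA, I would follow the skeleton of Proposition~\ref{prop:indq:semq}. Given an adversary \A, I would construct a simulator \sim that receives \A's circuit as non-uniform advice, generates its own key $k$, and emulates the challenger \C, answering all qCPA learning queries via the (type-(2)) quantum encryption oracle $U_{\Enc_k}$. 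At the challenge phase, instead of encrypting the real plaintext state $\rho_y = G_m(y)$, \sim encrypts a fixed reference state (e.g.\ the encryption of $\ket{1\ldots1}$, or more precisely a fixed state of the correct length admitting a classical description), while still correctly forwarding the advice $h_m(\rho_y)$ that \sim receives in the reduced game. The key step is to argue that \A's output, when fed the real ciphertext versus the bogus one, is QPT-indistinguishable; this follows from qIND-qCPA applied to the pair of classical descriptions $(\rho_y, \text{fixed state})$. Since the success condition for the qSEM game is precisely QPT-indistinguishability of \A's output from the target $f_m(\rho_y)$, the two success probabilities must be negligibly close.

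For the converse qSEM-qCPA $\Rightarrow$ qIND-qCPA, I would adapt Proposition~\ref{prop:semq:indq}. Assume a QPT distinguisher \A breaks qIND-qCPA, i.e.\ it distinguishes the encryptions of two classically-described states $\rho_0,\rho_1$ with non-negligible advantage. I would build an oracle machine $\MA$ that breaks qSEM-qCPA: $\MA$ runs \A, forwarding \A's qCPA queries to its own encryption oracle. When \A issues its qIND challenge template $(\rho_0,\rho_1)$, $\MA$ produces a qSEM challenge template $(G_m,h_m,f_m)$ where $G_m$ samples uniformly a description of $\rho_0$ or $\rho_1$ (using its classical randomness $y$ to select the branch), $h_m$ is a constant function so that the advice reveals nothing about the choice, and $f_m$ is the "distinguishing" circuit outputting $\ket{0}$ on (the description selecting) $\rho_0$ and $\ket{1}$ on $\rho_1$. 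Then $\MA$ forwards the received challenge ciphertext to \A, continues to relay any further qCPA queries, and converts \A's bit guess into its qSEM output. Because \A distinguishes with non-negligible advantage, $\MA$ reproduces $f_m(\rho_y)$ (a classical bit, hence trivially QPT-testable) with advantage non-negligibly above guessing, whereas no simulator seeing only the constant advice $h_m$ can; this contradicts qSEM-qCPA security.

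The main obstacle is ensuring that the quantum features of the qSEM definition are handled correctly, rather than the overall structure, which parallels the classical argument. Specifically, I expect three delicate points. First, the \emph{two-copies trick}: \C generates $\rho_y = G_m(y)$ twice from the same classical randomness $y$ to avoid no-cloning issues when simultaneously computing $h_m(\rho_y)$ and the ciphertext; the reduction must respect this and exploit that plaintexts are reproducible precisely because they come from classical descriptions (the $(c)$ model), which is exactly why entanglement with \A is ruled out and why the notion can be equivalent to qIND-qCPA. Second, the \emph{success condition}: since $f_m$ may output a genuinely quantum state, "winning" means QPT-indistinguishability of outputs rather than exact equality or small trace distance; I must phrase the forward direction so that qIND-qCPA (indistinguishability of the two ciphertext states) translates cleanly into QPT-indistinguishability of the induced outputs, invoking that composing a fixed QPT circuit \A preserves computational indistinguishability. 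Third, in the backward direction the target function is chosen to output a \emph{classical} bit, so testing \A's output is unproblematic, but I must verify that the constant advice $h_m$ truly leaks nothing, so that the simulator's optimal strategy is to guess, yielding the desired separation. A clean statement of these observations, together with the reductions above, completes both directions and hence the theorem.
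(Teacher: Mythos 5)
Your forward direction (qIND-qCPA $\Rightarrow$ qSEM-qCPA) is essentially identical to the paper's proof of Proposition~\ref{prop:qindq:qsemq}: same non-uniform advice, same simulation of the learning phase, same bogus ciphertext $U_{\Enc_k}\ket{1\ldots1}$, and the same appeal to the QPT-indistinguishability winning condition via a qIND challenge on the descriptions of $\rho_y$ and the fixed state. That half is fine.

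The backward direction, however, has a genuine gap in the choice of the target function. In the qSEM template, $f_m$ must be a quantum circuit acting on the $m$-qubit \emph{plaintext state} $\rho_y$ --- it does not get the randomness $y$ or the classical description as input (your parenthetical ``the description selecting'' is not something the definition permits). Your $f_m$ is required to map $\rho_0 \mapsto \ketbra{0}{0}$ and $\rho_1 \mapsto \ketbra{1}{1}$ exactly, but no physical map does this unless $\rho_0$ and $\rho_1$ have orthogonal supports, and a qIND-qCPA distinguisher is free to break the scheme using non-orthogonal challenge states. Even for orthogonal states, an \emph{efficient} perfectly-distinguishing circuit need not exist (cf.\ the paper's own footnote: orthogonal states can be quantum-polynomial-time indistinguishable). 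So the template you submit is in general not realizable, and the step ``$\MA$ converts \A's bit guess into its qSEM output'' has no well-defined target to be compared against. This is exactly the point where the classical argument of Proposition~\ref{prop:semq:indq}, where $f_m(x_0)=0$, $f_m(x_1)=1$ is trivially well-defined for distinct strings, fails to port to the quantum setting.

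The paper's proof of Proposition~\ref{prop:qsemq:qindq} circumvents this by taking $f_m$ to be the \emph{identity}, so the target is $\rho_y=\varphi_b$ itself, and by letting $\MA$ --- which recorded the classical descriptions of $\varphi_0,\varphi_1$ --- output a freshly prepared copy of $\varphi_{b'}$, where $b'$ is \A's guess; the ``distinguishing'' is thus delegated to the qSEM winning condition rather than built into the template. Note that this still implicitly needs $\varphi_0$ and $\varphi_1$ to be QPT-distinguishable given their descriptions (otherwise outputting the wrong state would also win, and a simulator could win too); this holds because, with the descriptions in hand, one can sample a fresh key, simulate the entire qIND game on a given plaintext state, and thereby convert \A's ciphertext-distinguishing advantage into a plaintext-distinguishing one. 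Your approach can be repaired along these lines --- take $f_m$ to be this efficient but \emph{imperfect} distinguisher and have $\MA$ output $f_m$ applied to a fresh copy of $\varphi_{b'}$ rather than the bare bit $\ket{b'}$ --- but then the output is no longer a ``trivially QPT-testable'' classical bit, and the analysis you would need is essentially the paper's.
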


Again, we split the proof of Theorem~\ref{theo:qequi} into two propositions.

\begin{proposition}\label{prop:qindq:qsemq}[qIND-qCPA $\Rightarrow$ qSEM-qCPA]
\end{proposition}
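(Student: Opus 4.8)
The plan is to adapt the classical proof of $\text{IND-CPA} \Rightarrow \text{SEM-CPA}$ (as in the sketch for Proposition~\ref{prop:indq:semq}) to the quantum setting, with the main adjustment being the handling of quantum advice, target states, and the QPT-indistinguishability success condition. First I would construct, for a given QPT adversary \A, a simulator \sim that receives \A's circuit (including the descriptions of $G_m, h_m, f_m$) as non-uniform advice. Since the challenge templates must be identically distributed, \sim simply outputs the very same template $(G_m, h_m, f_m)$ that \A would produce; this is immediate in the $(c)$ model because the template is a classical description and carries no entanglement with \A. The simulator then internally plays the role of the challenger: it generates a fresh key $k \from \Gen(1^n)$, emulates the quantum encryption oracle $U_{\Enc_k}$ to answer all of \A's qCPA learning queries (using its own knowledge of $k$), and samples $y \rand \bin^{\poly{n}}$ to prepare the plaintext copies $\rho_y = G_m(y)$.

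The crux is the challenge step. In the real qSEM-qCPA game \A receives $(U_{\Enc_k}\,\rho_y\,U_{\Enc_k}^\dag,\ h_m(\rho_y))$, whereas in the reduced game \sim receives only $h_m(\rho_y)$. To bridge this gap I would have \sim feed its internally-run copy of \A a \emph{fake} ciphertext, namely the encryption of a fixed reference plaintext --- e.g.\ $U_{\Enc_k}\,\proj{1^m}\,U_{\Enc_k}^\dag$ --- together with the genuine advice state $h_m(\rho_y)$, and then output whatever this internal copy of \A outputs. The correctness of the simulation then reduces to qIND-qCPA: I would argue that if \A's output in the reduced-game simulation (on the fake ciphertext) were distinguishable from \A's output in the real game (on the true ciphertext $U_{\Enc_k}\rho_y U_{\Enc_k}^\dag$), then one could build a qIND-qCPA distinguisher. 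This distinguisher would submit the classical descriptions of $\rho_y$ and $\ket{1^m}$ as the two challenge states, receive $U_{\Enc_k}\rho_{b}U_{\Enc_k}^\dag$, run \A on it together with $h_m(\rho_y)$, and apply the hypothesized output-distinguisher; a noticeable gap in \A's outputs translates directly into a noticeable advantage in guessing $b$, contradicting qIND-qCPA security.

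A subtle point deserving care is that the plaintext $\rho_y$ must be available both to compute the advice $h_m(\rho_y)$ and to form the challenge ciphertext without violating no-cloning; this is exactly why the qSEM challenge phase runs $G_m$ twice on the same classical randomness $y$, and the reduction can exploit the same classical seed $y$ to produce the copies it needs. I would also have to phrase the success condition correctly: since winning the qSEM game is defined via QPT-indistinguishability of \A's output from the target state $f_m(\rho_y)$, I must argue that the trace-distance-free, computational notion is preserved under the substitution of the fake ciphertext --- this follows because any QPT distinguisher separating \sim's output from $f_m(\rho_y)$ would, combined with the qIND reduction above, again contradict qIND-qCPA.

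The main obstacle I anticipate is a careful \emph{triangle-inequality-style} argument over three quantities: \A's output distribution on the true ciphertext (real game), \A's output on the fake ciphertext (which is what \sim produces), and the target state $f_m(\rho_y)$. One must show that \sim's output is QPT-indistinguishable from $f_m(\rho_y)$ whenever \A's is, and that the two differ only by a qIND-indistinguishable swap of ciphertexts. Because indistinguishability here is computational rather than statistical, I cannot simply bound trace distances; instead I would compose distinguishers, taking care that all intermediate machines (the emulated encryption oracle, the internal run of \A, and the output-distinguisher) remain quantum polynomial-time, and that the non-uniform advice correctly encodes \A's circuit so that \sim faithfully reproduces both \A's template distribution and \A's behaviour on the challenge.
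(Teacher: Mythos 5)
Your proposal follows essentially the same route as the paper's proof: the simulator receives \A's circuit as non-uniform advice, forwards \A's template to the reduced-game challenger, emulates the qCPA oracle under a self-generated key, hands \A the genuine advice state together with a bogus ciphertext (the paper uses $U_{\Enc_k}\ket{1\ldots1}$), and the gap to the real game is closed by exactly the qIND-qCPA reduction you describe. One caveat: the sentence where \sim itself ``samples $y \rand \bin^{\poly{n}}$'' should be dropped --- in the reduced game $y$ is sampled by the challenger and \sim sees only $h_m(\rho_y)$; your subsequent steps (and the qIND reduction, where sampling $y$ internally \emph{is} legitimate) already handle this correctly.
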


\begin{proposition}\label{prop:qsemq:qindq}[qSEM-qCPA $\Rightarrow$ qIND-qCPA]
\end{proposition}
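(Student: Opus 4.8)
The plan is to prove the contrapositive, closely following the classical reduction of Proposition~\ref{prop:semq:indq}: from an efficient adversary \A that breaks qIND-qCPA I build an oracle machine \MA that breaks qSEM-qCPA. Since \A wins the qIND game (Definition~\ref{def:qIND}), there is a pair of quantum states $\rho_0,\rho_1$, handed to \A's challenger by their classical descriptions in the sense of Definition~\ref{def:classrepr}, such that \A distinguishes $U_{\Enc_k}\rho_0 U_{\Enc_k}^\dagger$ from $U_{\Enc_k}\rho_1 U_{\Enc_k}^\dagger$ with non-negligible advantage. \MA reads off these two classical descriptions and hard-wires them into a qSEM challenge template $(G_m,h_m,f_m)$, where $G_m$ on classical randomness $y$ uses a single bit $b=y_1$ to run the description of $\rho_b$ (so that $\rho_y$ is uniform over $\{\rho_0,\rho_1\}$), $h_m$ is a constant circuit carrying no information about $b$, and $f_m$ is a target circuit meant to expose the challenge bit. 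No-cloning causes no trouble: exactly as in Definition~\ref{def:qsemqcpa}, \C produces its two copies of $\rho_y$ by running $G_m$ twice on the same $y$, and the plaintext is generated from \C's own randomness, hence never entangled with the players.

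Next I would let \MA forward all of \A's qCPA learning queries to its own quantum encryption oracle, perfectly emulating the learning phases. Upon receiving the challenge pair $(U_{\Enc_k}\rho_b U_{\Enc_k}^\dagger,\,h_m(\rho_b))$, \MA runs \A on the ciphertext register to extract a guess for $b$ and shapes its qSEM output accordingly. The whole leverage of the reduction is the asymmetry between \MA and any simulator \sim: the advice $h_m(\rho_b)$ is independent of $b$, so \sim operates with no information about the challenge bit beyond the publicly known distribution over $\{\rho_0,\rho_1\}$, whereas \MA additionally holds the ciphertext and, through \A, can read $b$ from it. A useful observation here is that the qCPA oracle lets \MA relate plaintexts and ciphertexts at will (it can encrypt a state and then apply \A's ciphertext distinguisher), so the advantage \A exhibits on ciphertexts transfers to a genuine gap between what \MA and what \sim can produce.

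The hard part will be the quantum success condition. Classically, $f_m$ recovers $b$ from the plaintext for free because two distinct classical messages are perfectly distinguishable and ``winning'' just means outputting the bit~$b$; neither is automatic here, since the states $\rho_0,\rho_1$ chosen by \A may be non-orthogonal and I must work with the QPT-indistinguishability notion of winning from Definition~\ref{def:qsemqcpa} rather than with a clean bit-guessing probability. I therefore cannot assume $f_m$ turns $\rho_b$ into a crisp flag $\lvert b\rangle$, and I expect the delicate step to be choosing $f_m$ and formalizing the comparison so that (i) \MA's output is QPT-indistinguishable from the target with probability non-negligibly above what is information-theoretically available from $h_m$ alone, while (ii) every \sim, producing a $b$-independent state, is separated from the genuine target on a noticeable fraction of the bits $b$. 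The two features that make this tractable are precisely those singled out after Theorem~\ref{theo:qind-gqind}: in the $(c)$ model the challenge always returns an \emph{unentangled} state, so the entire analysis can be carried out in the trace norm instead of the diamond norm, and the classical-description structure guarantees that the challenge plaintext is never entangled with the players, which is exactly what forbids \sim from secretly correlating its output with $b$ and is the reason the equivalence holds for qIND (and not for gqIND).
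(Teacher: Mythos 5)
Your reduction skeleton matches the paper's: you run \A inside an oracle machine \MA, forward the qCPA queries, build a qSEM template in which $G_m$ samples uniformly from $\{\varphi_0,\varphi_1\}$ using one bit of the challenger's randomness, make the advice $h_m$ constant so that no simulator \sim can learn anything about the challenge bit, and run \A on the challenge ciphertext to recover $b$. However, there is a genuine gap at exactly the point you yourself flag as ``the delicate step'': you never exhibit the target function $f_m$, and the desiderata (i)--(ii) you list are requirements, not a construction. Moreover, your guiding idea --- an $f_m$ ``meant to expose the challenge bit'', i.e.\ a quantum analogue of the classical distinguishing function of Proposition~\ref{prop:semq:indq} --- is precisely the choice that cannot work: as you observe, no quantum circuit maps two possibly non-orthogonal states $\varphi_0,\varphi_1$ to distinct classical flags, so with that kind of $f_m$ the target state is not something any party (including an honest evaluator) could produce, and the winning analysis never gets off the ground.

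The paper closes this gap with a move you are missing: take $f_m$ to be the \emph{identity}, $f_m(\rho_y)=\rho_y$. Then the target is the plaintext $\varphi_b$ itself, and \MA never needs to extract anything from a quantum state: having recorded the classical descriptions of $\varphi_0$ and $\varphi_1$ when \A handed them over (this is where the $(c)$ model is exploited on the \emph{reduction's} side, not only against \sim as in your last paragraph), \MA uses \A's guess of $b$ and then re-prepares $\varphi_b$ \emph{exactly} from the stored description. Whenever \A guesses correctly --- which happens with probability non-negligibly above $1/2$ --- \MA's output is literally the target state, so it trivially passes the QPT-indistinguishability winning condition of Definition~\ref{def:qsemqcpa}; non-orthogonality of the challenge states is thus rendered harmless rather than something to be quantified. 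Meanwhile the simulator, seeing only the constant advice $h_m(\rho_y)=1^n$, can do no better than guessing which of the two states to output, which yields the non-negligible gap. Your proposal correctly identifies all the constraints the construction must satisfy, but without the identity-target trick (or an equivalent device) the proof remains incomplete at its crucial step.
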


\begin{proof}[of Proposition~\ref{prop:qindq:qsemq} -- Sketch.]
The proof follows that of Proposition~\ref{prop:indq:semq}, with some careful observations. Since \A is a QPT adversary against the qSEM-qCPA game, \A's circuit has a short classical representation $\xi$. So \sim gets $\xi$ as non-uniform advice and hence can implement and run \A. The simulator \sim simulates \C for \A by generating a new key and answering all of \A's qCPA queries. When it comes to the challenge query, \A produces a qSEM template, which \sim forwards to the real \C. Then \sim forwards \C's reply, plus a bogus encrypted state (e.g., $U_{\Enc_k} \, \ket{1\ldots1}$), to \A. If at this point \A outputs a state $\varphi$ which can be efficiently distinguished from the correct $f_m(\rho_y)$ computed by the real \C, we would have an efficient distinguisher against the qIND-qCPA security of the scheme. Hence, \A's (and therefore also \sim's) output must be indistinguishable from $f_m(\rho_y)$ for any QPT distinguisher, which concludes the proof.\qed
\end{proof}

\begin{proof}[of Proposition~\ref{prop:qsemq:qindq}]
This is also similar to the proof of Proposition~\ref{prop:semq:indq}. Given an efficient distinguisher \A for the qIND-qCPA game, our adversary for the qSEM-qCPA game is an oracle machine $\MA$ running \A and acting as follows. Concerning \A's qCPA queries, as usual $\MA$ just forwards everything to the qSEM-qCPA challenger \C. When \A performs a challenge qIND query by sending the classical descriptions of two states $\varphi_0$ and $\varphi_1$, $\MA$ prepares the qSEM template $(G_m,h_m,f_m)$, with $G_m$ outputing $\varphi_0$ for half of the possible $y$ values and $\varphi_1$ for the other half, $h_m(\rho_y)=1^n$, and $f_m$ the identity map $f_m(\rho_y)=\rho_y$. Then $\MA$ performs a qSEM challenge query with this template. Given challenge ciphertext state $U_{\Enc_k} \, \varphi_b \, U_{\Enc_k}^\dag$ (for $b\in\bin$), he forwards it as an  answer to \A's challenge query. As \A distinguishes $U_{\Enc_k} \, \varphi_0 \, U_{\Enc_k}^\dag$ from $U_{\Enc_k} \, \varphi_1\, U_{\Enc_k}^\dag$ with non-negligible success probability, \A returns the correct value of $b$ with non-negligible advantage over guessing. Then $\MA$, having recorded a copy of the classical descriptions of $\varphi_0$ and $\varphi_1$, is able to compute the state $f_m(\varphi_b)$ exactly, and consequently win the qSEM-qCPA game with non-negligible advantage. As $h_m$ generates the same advice state $h_m(\rho_y)=1^n$ independently of the encrypted message, no simulator can do better than guessing the plaintext. This concludes the proof.\qed
\end{proof}

\vspace{0.2in}

Finally, we show the separation result between the two classes of security we have identified (we show it between IND-qCPA and qIND-qCPA). This shows that qIND-qCPA (and equivalently qSEM-qCPA) is a strictly stronger notion than IND-qCPA (which is equivalent to SEM-qCPA).

\begin{theorem}[IND-qCPA $\nRightarrow$ qIND-qCPA]\label{theo:separ}
There exists a symmetric-key encryption scheme $\E$ which is IND-qCPA secure but not qIND-qCPA secure.
\end{theorem}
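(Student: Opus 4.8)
The plan is to exhibit a single scheme that is IND-qCPA secure by a routine argument yet succumbs, inside the qIND challenge phase, to a Hadamard-style distinguisher of exactly the flavour of the BZ attack in Theorem~\ref{thm:BZatt}. Concretely, I would take a quantum-secure pseudorandom function $f_k:\bin^n\to\bin^m$ and use the randomized, essentially length-preserving scheme
\[
\Enc_k(x)=\bigl(r,\; f_k(r)\xor x\bigr),\qquad r\rand\bin^n .
\]
This is exactly the standard IND-qCPA construction from a quantum-secure PRF used by Boneh and Zhandry~\cite{Boneh2013}, and its message-dependent part $f_k(r)\xor x$ does not expand $x$ -- the regime where I expect a type-(2) distinguisher to bite.

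For IND-qCPA security I would invoke (rather than reprove) the corresponding result of~\cite{Boneh2013}. The intuition is that the IND challenge is classical, so the challenge ciphertext is $(r^\ast,f_k(r^\ast)\xor x_b)$ with a fresh $r^\ast$; replacing $f_k$ by a truly random function (which quantum-security of the PRF permits even in the presence of the superposition learning queries of the qCPA phase) makes $f_k(r^\ast)$ an independent uniform value except with negligible probability, since the learning queries use independent randomness and thus reveal nothing about $f_k(r^\ast)$. Then $f_k(r^\ast)\xor x_b$ is a one-time pad and hides $b$ information-theoretically.

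To break qIND-qCPA I would mount a direct attack that uses no learning queries at all. The adversary sends the (trivially efficient) classical descriptions of $\rho_0=\proj{0^m}$ and $\rho_1=H^{\otimes m}\proj{0^m}H^{\otimes m}$. In the qIND phase (Definition~\ref{def:qIND}) the honest challenger initializes the ancilla to $\ket{0^n}$ and applies the type-(2) operator $U_{\Enc_k}$ of~\eqref{eq:UEnc2}, so for the query's fixed randomness $r$ it acts as $\ket{x}\ket{0^n}\mapsto\ket{r}\ket{f_k(r)\xor x}$. Hence if $b=0$ the reply is $\ket{r}\otimes\ket{f_k(r)}$, a computational-basis state in the $m$-qubit register, whereas if $b=1$ it is $\ket{r}\otimes\frac{1}{2^{m/2}}\sum_y\ket{y}=\ket{r}\otimes H^{\otimes m}\ket{0^m}$, because XOR by the fixed string $f_k(r)$ is a permutation and therefore fixes the uniform superposition. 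The adversary traces out the $r$-register, applies $H^{\otimes m}$ to the message register and measures: it obtains $0^m$ with probability $1$ when $b=1$ but only $2^{-m}$ when $b=0$, so outputting $b'=1$ iff the outcome is $0^m$ wins with overwhelming probability. This is morally the BZ attack of Theorem~\ref{thm:BZatt}, transplanted into the $(\C c n 2)$ setting.

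The main obstacle is not the attack, which is immediate, but pinning down the IND-qCPA proof so that the quantum learning phase is treated correctly: one must argue that fresh per-query randomness together with the PRF's quantum-security keeps $f_k(r^\ast)$ pseudorandom and independent of everything extracted in superposition during the qCPA phase. A secondary point to verify is that the attack respects the type-(2) and non-relaying conventions of qIND -- in particular that the register carrying the challenger's randomness $r$ factors out cleanly, so that tracing it out leaves the message register in the claimed pure state. Both hold here precisely because, conditioned on the classical $r$ chosen within a query, $f_k(r)\xor(\cdot)$ is a permutation, which is also exactly why the impossibility phenomenon for quasi--length-preserving ciphers manifests itself on this scheme.
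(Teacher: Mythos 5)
Your proposal is correct and takes essentially the same route as the paper: the paper also uses the PRF-based scheme $\Enc_k(x)=(r,\F_k(r)\oplus x)$ (Goldreich's Construction 5.3.9), likewise invokes the Boneh--Zhandry result for its IND-qCPA security, and breaks qIND-qCPA with a Hadamard-based distinguisher exploiting that the type-(2) encryption merely permutes computational-basis states. The only immaterial difference is the choice of challenge states: the paper uses $H\ket{0^m}$ and $H\ket{1^m}$, which are both left invariant and are perfectly distinguishable (success probability exactly $1$), whereas you use $\ket{0^m}$ and $H^{\otimes m}\ket{0^m}$ and win with overwhelming probability $1-2^{-m-1}$.
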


\begin{proof}[of Theorem~\ref{theo:separ}]
The scheme we use as a counterexample is the one from \cite{Goldreich2004}(Construction 5.3.9). It has been proven in \cite{Boneh2013} that this scheme is IND-qCPA secure if the used PRF is post-quantum secure. We exhibit a distinguisher \A which breaks the qIND-qCPA security of this scheme with high probability. For ease of notation we restrict to the case of single-bit messages $0$ and $1$. \A will simply choose as challenge states: $\ket{\varphi_0} = H\ket{0} = \frac{1}{\sqrt{2}}\ket{0} + \frac{1}{\sqrt{2}}\ket{1}$, and $\ket{\varphi_1} = H\ket{1} = \frac{1}{\sqrt{2}}\ket{0} - \frac{1}{\sqrt{2}}\ket{1}$. When the challenger \C applies the type-2 transformation to either of these two states, it is easy to see that in any case the state is left unchanged. This is because $U_{\Enc_k}$ just applies a permutation in the space of the basis elements, but $\ket{\varphi_0}$ and $\ket{\varphi_1}$ have the same amplitudes on all their components, except for the sign. As these two states are orthogonal, they can be reliably distinguished by the adversary \A who can then win the qIND-qCPA game with probability 1.\qed
\end{proof}

The above proof can be generalized to message states of arbitrary length, as our impossibility result in Section~\ref{imposs} shows.

\section{Impossibility and Achievability Results}\label{sec:constrimp}

In this section we show that qIND-qCPA (and equivalently qSEM-qCPA) is impossible to achieve for encryption schemes which do not expand the message (such as stream ciphers and many block ciphers, without considering the randomness part in the ciphertext). Therefore, for a scheme to be secure according to this new definition, it is necessary (but not sufficient) to increase the message size during the encryption. Interestingly, such an increase happens in most public-key post-quantum encryption schemes, like for example LWE based schemes \cite{Lindner2011} or the McEliece scheme \cite{McEliece1978}.

Then we propose a construction of a qIND-qCPA--secure symmetric-key encryption scheme. Our construction works for any (quantum-secure) pseudorandom permutation (PRP). Given that block ciphers are usually modelled as PRPs, it seems reasonable to assume that we can obtain a secure scheme when using block ciphers with sufficiently large key and block size. Hence, our construction can be used to patch existing schemes, or as a guideline in the design of quantum-secure encryption schemes from block ciphers.

\subsection{Impossibility Result}\label{imposs}

First we formally define what it means for a cipher to expand or keep constant the message size by defining the {\em core function} of a (secret-key) encryption scheme. Intuitively, the definition splits the ciphertext into the randomness and a part carrying the message-dependent information. This definition covers most encryption schemes in the literature.

\begin{definition}[Core function]\label{def:core}
 Let $(\Gen,\Enc,\Dec)$ be a secret-key encryption scheme. We call the function $f:\kSpace \times \bin^\tau \times \msgSpace \rightarrow \range$ the {\em core function} of the encryption scheme if, for some $\tau \in \NN$:
 \begin{itemize}
  \item for all $k\in\kSpace$ and $x \in \msgSpace$, $\Enc_k(x)$ can be written as $(r,f(k,r,x))$, where $r\in\bin^\tau$ is independent of the message; and
  \item there exists a function $f'$ such that for all $k\in\kSpace, r\in\bin^\tau, x\in\msgSpace$, we have: $f'(k,r,f(k,x,r)) = x$.
 \end{itemize}
\end{definition}

For example, in case of Construction~5.3.9 from~\cite{Goldreich2004} (where $Enc_k(x)$ is defined as $(r,\F_k(r)\xor x)$ for a PRF $\F$) the core function is $f(k,r,x) = \F_{k}(r) \xor x$, with $f'(k,r,z) = z \oplus F_k(r)$.

\begin{definition}[Quasi--length-preserving encryption]
We call a secret-key encryption scheme with core function $f$ {\em quasi--length-preserving} if 
$$
\forall x \in \msgSpace, r \in \bin^\tau, k \in \kSpace \Rightarrow |f(k,x,r)| = |x|,
$$
i.e., if the output of the core function has the same bit length as the message.
\end{definition}

Continuing the above example, Construction 5.3.9 from \cite{Goldreich2004} is quasi--length-preserving.

The crucial observation is the following: For a quasi--length-preserving encryption scheme, the space of possible input and (core function) output bitstrings (with respect to plaintext and ciphertext) coincide, therefore these ciphers act as permutations on this space. This means that if we start with an input state which is a superposition of {\em all} the possible basis states, all of them with the {\em same} amplitude, this state will be unchanged by the unitary type-(2) encryption operation (because it will just `shuffle' in the basis-state space amplitudes which are exactly the same).

\begin{theorem}[Impossibility Result]\label{thm:impossibility}
No quasi--length-preserving secret-key encryption scheme can be qIND secure.
\end{theorem}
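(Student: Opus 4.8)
The plan is to generalize the single-bit distinguisher of Theorem~\ref{theo:separ} (equivalently, the BZ attack of Theorem~\ref{thm:BZatt}) to arbitrary message length $m$, exploiting the one structural fact that quasi--length-preservation buys us: since $|f(k,r,x)| = |x| = m$ and $f'$ inverts $f$, for every fixed $k \in \kSpace$ and $r \in \bin^\tau$ the map $x \mapsto f(k,r,x)$ is a \emph{permutation} of $\bin^m$. First I would fix the challenge template: \A submits the classical descriptions of $\rho_0 = H^{\otimes m}\ket{0^m} = \frac{1}{2^{m/2}}\sum_{x}\ket{x}$ (the uniform superposition, producible by $m$ Hadamard gates) and $\rho_1 = \ket{0^m}$ (a computational basis state). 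Both admit efficient classical descriptions, so they are legal qIND challenges. Note that the type-(2) formulation is exactly what lets the attack go through \emph{without} relaying, since the transformed message register is handed straight back to \A.

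Next I would track the action of the type-(2) operator $U_{\Enc_k}$ on each candidate. On one query the oracle uses a single fixed randomness $r$, so the ciphertext register factors as $\ket{r}\otimes(\cdots)$ with the randomness part a product state carrying no information about $b$; I would note up front that \A simply ignores (traces out) this register and works on the $m$-qubit core-output register. On that register $U_{\Enc_k}$ acts as the permutation $\ket{x}\mapsto\ket{f(k,r,x)}$. Applied to $\rho_0$ this yields $\frac{1}{2^{m/2}}\sum_x \ket{f(k,r,x)} = \frac{1}{2^{m/2}}\sum_y\ket{y}$ -- the uniform superposition is invariant because the permutation merely reshuffles equal amplitudes -- whereas applied to $\rho_1$ it yields the single basis state $\ket{f(k,r,0^m)}$.

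The distinguishing step then mirrors the $m=1$ case: \A applies $H^{\otimes m}$ to the core-output register and measures in the computational basis, outputting $b'=0$ iff the outcome is $0^m$. If $b=0$ the register is exactly $H^{\otimes m}\ket{0^m}$, so $H^{\otimes m}$ returns it to $\ket{0^m}$ and the outcome is $0^m$ with certainty; if $b=1$ the register is a basis state, so $H^{\otimes m}$ spreads it into a superposition with uniform magnitudes and the outcome equals $0^m$ with probability only $2^{-m}$. Hence \A wins with probability $1-\tfrac{1}{2}2^{-m}$, which is non-negligibly bounded away from $1/2$ since $m=\poly{n}$, breaking qIND. As no learning phase is used, the attack breaks qIND-qCPA as well, and by Theorem~\ref{theo:qind-gqind} the impossibility transfers to gqIND(-qCPA) too.

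The one step that genuinely needs the hypothesis -- and hence where I would be most careful -- is the invariance of $\rho_0$: it rests entirely on $f(k,r,\cdot)$ being a bijection of $\bin^m$, which is precisely the quasi--length-preserving condition ($|f(k,r,x)|=|x|$ together with the inverse $f'$). I would also take care to justify that the randomness register is a genuine product state under the single-$r$-per-query convention of Definition~\ref{def:quantumEncryptionOracle}, so that discarding it is harmless and the analysis really does reduce to a permutation acting on the $m$-qubit message register alone.
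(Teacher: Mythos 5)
Your proof is correct and takes essentially the same approach as the paper: both generalize the distinguisher of Theorem~\ref{theo:separ} by exploiting that quasi--length-preservation makes $x \mapsto f(k,r,x)$ a permutation of $\bin^m$, so the uniform superposition $H\ket{0^m}$ is invariant under the type-(2) encryption and a final Hadamard-and-measure test on the core-function register reveals the challenge bit. The only deviation is your second challenge state $\ket{0^m}$ in place of the paper's $H\ket{1^m}$: the paper's phase-based choice yields perfect distinguishing (after the final Hadamard the amplitude on $0^m$ cancels exactly, so \D wins with probability $1$), while yours wins with probability $1-2^{-m-1}$, which is equally sufficient to refute qIND security.
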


\begin{proof}
Let $(\Gen,\Enc,\Dec)$ be a quasi--length-preserving scheme. We show an attack that is a generalization of the distinguishing attack in Theorem~\ref{theo:separ}.

\begin{enumerate}
\item for $m$-bit message strings, the distinguisher \D sets the two plaintext states for the qIND- game to be: $\ket{\varphi_0} = H \ket{0^m}, \ket{\varphi_1} = H \ket{1^m}$, where $H$ is the $m$-fold tensor Hadamard transformation. Notice that both these states admit efficient classical representations, and are thus allowed in the qIND game.
\item The challenger flips a random bit $b$ and returns $\ket{\psi} = U_{\Enc_k}\ket{\varphi_b}$.
\item \D applies $H$ to the core-function part of the ciphertext $\ket{\psi}$ and measures it in the computational basis. \D outputs $0$ if and only if the outcome is $0^m$, and outputs $1$ otherwise.
\end{enumerate}

As already observed, applying $U_{\Enc_k}$ to $H \ket{0^m}$ leaves the state untouched: since the encryption oracle merely performs a permutation in the basis space, and since $\ket{\varphi_0}$ is a superposition of every basis element with the same amplitude, it follows that whenever $b$ is equal to $0$, the ciphertext state will be unchanged. In this case, after applying the self-inverse transformation $H$ again, $\D$ obtains measurement outcome $0^m$ with probability 1. On the other hand, if $b=1$, $\ket{\varphi_1} = \frac{1}{2^{m/2}}\sum_y (-1)^{y \cdot 1^m} \ket{y}$ where $a \cdot b$ denotes the bitwise inner product between $a$ and $b$. Hence, $\ket{\varphi_1}$ is a superposition of every basis element where (depending on the parity of $y$) half of the elements have a positive amplitude and the other half have a negative one, but all of them will be equal in absolute value. Applying $U_{\Enc,k}$ to this state, results in $ \frac{1}{2^{m/2}}\sum_y (-1)^{y \cdot 1^m} \ket{\Enc_k(y)}$. After re-applying $H$, the amplitude of the basis state $\ket{0^m}$ becomes $\sum_y (-1)^{y \cdot 1^m + \Enc_k(y) \cdot 0^m}$ which is easily calculated to be 0. Hence, the above attack gives \D a way of perfectly distinguishing between encryptions of the two plaintext states. \qed
\end{proof}

Notice that the above attack also works if \A is allowed to send quantum states to \C directly. Therefore, it also holds for the gqIND notion of quantum indistinguishability described in Section~\ref{sec:qindqcpa}. In particular, the above theorem shows that \cite[Construction 5.3.9]{Goldreich2004}, which in~\cite{Boneh2013} was shown to be IND-qCPA if the used PRF is quantum secure, does not fulfill qIND, nor gqIND.

This attack is a consequence of the well-known fact that, in order to perfectly (information-theoretically) encrypt a single quantum bit, {\em two} bits of classical information are needed: one to hide the basis bit, and one to hide the phase (i.e.\ the signs of the amplitudes). The fact that we are restricted to quantum operations of the form  $U_{\Enc_k}$ - that is, quantum instantiations of classical encryptions - means that we cannot afford to hide the phase as well, and this restriction allows for an easy distinguishing procedure.

\subsection{Secure Construction}

Here we propose a construction of a qIND-qCPA secure symmetric-key encryption scheme from any family of quantum-secure pseudorandom permutations (see Appendix~\ref{app:defs} for formal definitions).

\begin{construction}\label{const:qIND}
For security parameter $n$, let $m= \poly{n}$ and $\tau= \poly{n}$. Consider an efficient family of permutations $\Pi_{m+\tau} = (\init, \Pi, \Pi^{-1})$ with key space $\kSpace_\Pi$ that operates on bit strings of length $m+\tau$, and consider a plaintext message space $\msgSpace = \bin^m$, key space $\kSpace = \kSpace_\Pi$, and ciphertext space $\C=\bin^{m+\tau}$. The construction is given by the following algorithms:
\begin{description}
 \item[Key generation algorithm $k \exec \Gen(1^n)$:] on input of security parameter $n$, the key generation algorithm runs $k\exec\init(1^{m+\tau})$ and returns secret key $k$.
 \item[Encryption algorithm $y \exec \Enc_k(x)$:] on input of message $x \in \msgSpace$ and key $k \in \kSpace$, the encryption algorithm samples a $\tau$-bit string $r\rand\bin^\tau$ uniformly at random, and outputs $y = \pi_k(x\|r)$ ($\|$ denotes string concatenation).
 \item[Decryption algorithm $x \exec \Dec_k(y)$:] on input of ciphertext $y \in \C$ and key $k \in \kSpace$, the decryption algorithm first runs $x' = \pi^{-1}_k(y)$, and then returns the first $m$ bits of $x'$.
\end{description}
\end{construction}

The soundness of the construction can be easily checked. The security is stated in the following theorem.

\begin{theorem}[qIND-qCPA security of Construction~\ref{const:qIND}]\label{thm:constr}
If $\Pi_{m+\tau}$ is a family of quantum-secure pseudorandom permutations (qPRP), then the encryption scheme $(\Gen,\Enc,\Dec)$ defined in Construction \ref{const:qIND} is qIND-qCPA secure.
\end{theorem}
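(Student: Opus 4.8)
The plan is to show that Construction~\ref{const:qIND} is qIND-qCPA secure by a standard hybrid argument that replaces the pseudorandom permutation $\pi_k$ by a truly random permutation $P$ on $\bin^{m+\tau}$, and then to argue information-theoretic indistinguishability in the resulting idealized game. First I would invoke the qPRP assumption: since $\Pi_{m+\tau}$ is a \emph{quantum-secure} pseudorandom permutation, no QPT adversary with quantum (type-(2)) oracle access can distinguish $\pi_k$ together with $\pi_k^{-1}$ from a random permutation $P$ and its inverse, except with negligible advantage. Here it is crucial that the qPRP security covers superposition access to \emph{both} the permutation and its inverse, because the challenger's type-(2) operator $U_{\Enc_k}$ implicitly uses $\pi_k$ in a way that is invertible, and the qCPA learning phase gives the adversary superposition access to encryptions. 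Replacing $\pi_k$ by $P$ in both the learning phase and the challenge phase changes the adversary's success probability only negligibly, so it suffices to bound the advantage in the random-permutation game.

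Next I would analyze the idealized game. In the challenge phase the adversary submits classical descriptions of two states $\rho_0,\rho_1$ on $m$ qubits, and the challenger returns $U_{\Enc_k}\,\rho_b\,U_{\Enc_k}^\dagger$, where encryption now first appends a uniformly random $\tau$-bit string $r$ and then applies the random permutation $P$ to the $(m+\tau)$-bit register. The key step is to show that, from the adversary's view, this challenge ciphertext carries negligible information about $b$. The intuition is that the random padding $r$ embeds the $m$-qubit plaintext into a much larger space $\bin^{m+\tau}$, and applying a random permutation $P$ to a superposition that lives on a randomly chosen affine coset (one $r$-value in superposition, or a fresh $r$ per basis element, depending on how the type-(2) operator treats the randomness) maps it to an essentially uniformly scrambled state whose reduced density matrix is exponentially close to the maximally mixed state on the ciphertext support, independent of which plaintext was encrypted. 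I would make this precise by computing (or bounding) the expected ciphertext density operator $\mathbb{E}_P[\,U_{\Enc}\,\rho_b\,U_{\Enc}^\dagger\,]$ and showing that its dependence on $b$ is negligible in trace norm; since in the qIND ($c$-model) setting the returned state is unentangled with the adversary, the trace norm is the right metric and directly bounds the distinguishing advantage.

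The main obstacle, and the part that requires the most care, is pinning down exactly how the fresh randomness $r$ interacts with the type-(2) encryption operator in superposition, and then getting a clean concentration bound over the random permutation $P$. Definition~\ref{def:quantumEncryptionOracle} fixes a single $r$ per query used coherently across the superposition, so I must verify that the honest challenger behavior (initializing the ancilla to $\ket{0}$ and producing a correct type-(2) encryption) corresponds to a well-defined unitary on the appropriate subspace, and that the randomness is genuinely averaged over in the right way. The technical heart is then a bound of the form
$$
\bigl\|\,\mathbb{E}_{P,r}\bigl[U_{\Enc}\,\rho_0\,U_{\Enc}^\dagger\bigr] - \mathbb{E}_{P,r}\bigl[U_{\Enc}\,\rho_1\,U_{\Enc}^\dagger\bigr]\,\bigr\|_1 = \negl{n},
$$
which should follow because the padding expands the support by a factor $2^\tau$ and a random permutation washes out the plaintext-dependent structure; choosing $\tau = \poly{n}$ large enough makes the residual bias negligible. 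I would also need to confirm that the extra qCPA learning queries do not help, which again reduces to the qPRP indistinguishability already used, so they add only a negligible term. Assembling the hybrids — real game to random-permutation game (qPRP), and random-permutation game to a $b$-independent game (the trace-norm bound) — yields that every QPT adversary wins with probability negligibly close to $1/2$, establishing qIND-qCPA security.
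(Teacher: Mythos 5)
Your overall strategy coincides with the paper's: the paper proves this theorem as the single-block special case of Theorem~\ref{thm:extension} (the length-extended Construction~\ref{const:extension}), and its proof consists of exactly your two hybrids --- replace the qPRP by a truly random permutation, then argue information-theoretically that padding with fresh randomness and applying a random permutation sends every plaintext state to something negligibly close to the maximally mixed state on $m+\tau$ qubits. Your ``technical heart'' is precisely the paper's Lemma~\ref{lemma:diamond}, which gives $\|\mathcal{E}-\mathcal{T}\|_{\diamond}\leq 2^{-\tau+2}$ (the diamond norm because the paper proves the stronger gqIND-qCPA notion; your trace-norm version suffices for qIND). The genuine gap is your treatment of the qCPA learning phase. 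You claim the learning queries ``reduce to the qPRP indistinguishability already used''; they do not. After the hop to a truly random permutation $P$, the learning queries are superposition queries to the very same $P$ that encrypts the challenge, so the challenge ciphertext is correlated with the adversary's view, and your bound on $\bigl\|\mathbb{E}_{P,r}\bigl[U_{\Enc}\,\rho_0\,U_{\Enc}^\dagger\bigr]-\mathbb{E}_{P,r}\bigl[U_{\Enc}\,\rho_1\,U_{\Enc}^\dagger\bigr]\bigr\|_1$, taken over a uniform $P$ independent of everything else, says nothing about the conditional distribution of $P$ given the learning transcript. This is an information-theoretic issue, not a computational one, and the paper spends real work on it: two further game hops (force the challenge randomness to be fresh, then even hand all used randomness to the adversary) and Corollary~\ref{cor:diamond}, which shows that conditioned on a set $T$ of ``taken'' outputs (conservatively $|T|=q\cdot 2^m$ after $q$ single-block learning queries), encryption with fresh randomness is still within $4/(2^{\tau}-|T|/2^m)$ of a constant channel. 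An argument of this shape is indispensable; without it your final trace-norm bound does not apply to the actual game.

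A second discrepancy concerns the assumption. You invoke indistinguishability of $\pi_k$ \emph{together with} $\pi_k^{-1}$ from a random permutation and its inverse, i.e., a \emph{strong} qPRP, which is a stronger hypothesis than the theorem states; the paper explicitly argues (Appendix~\ref{app:defs}) that strong qPRPs are not required, since the type-(2) operations are implemented by the challenger, who holds $k$, and in a CPA-only game the adversary never touches a decryption oracle. Your underlying worry is legitimate --- a reduction that has only XOR-type (type-(1)) superposition access to $\pi$ cannot coherently implement $\ket{x}\mapsto\ket{\pi(x\|r)}$, because uncomputing the input register amounts to inverting $\pi$ --- and the paper's stance is best read as granting the qPRP oracle in the same minimal (type-(2)) access mode, under which no inverse queries are needed. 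As written, however, your proof establishes the theorem only under the strong-qPRP hypothesis; you should either justify the challenger simulation with forward-only access or explicitly flag this mismatch with the stated assumption.
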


In the next section, we prove the security of a more powerful scheme which includes the above theorem as special case of a single message block.

\subsection{Length Extension}
Construction~\ref{const:qIND} has the drawback that the message length is upper bounded by the input length of the qPRP (minus the bit length of the randomness). However, like in the case of block ciphers, we can overcome this issue with a {\em mode of operation}. More specifically, we can handle arbitrary message lengths by splitting the message into $m$-bit blocks and applying the encryption algorithm of Construction~\ref{const:qIND} independently to each message block (using the same key but new randomness for each block). This procedure is akin to a `randomized ECB mode', in the sense that each message block is processed separately, like in the ECB (Electronic Code Book) mode, but in our case the underlying cipher is inherently randomized (since we use fresh randomness for each block), so we can still achieve qCPA security. For simplicity we consider only message lengths which are multiples of $m$. The construction can be generalized to arbitrary message lengths using standard padding techniques. Moreover, the randomness for every block can be generated efficiently using a random seed and a post-quantum secure PRNG.

\begin{construction}\label{const:extension}
For security parameter $n$, let $m= \poly{n}$ and $\tau= \poly{n}$. Consider an efficient family of permutations $\Pi_{m+\tau} = (\init, \Pi, \Pi^{-1})$ with key space $\kSpace_\Pi$ that operates on bit strings of length $m+\tau$, and consider a plaintext message space $\msgSpace = \bin^{\mu m}$ for $\mu \in \NN, \mu = \poly{n}$, key space $\kSpace = \kSpace_\Pi$, and ciphertext space $\C=\bin^{\mu(m+\tau)}$. The construction is given by the following algorithms:
\begin{description}
 \item[Key generation algorithm $k \exec \Gen(1^n)$:] on input of security parameter $n$, the key generation algorithm runs $k\exec\init(1^{m+\tau})$ and returns secret key $k$.
 \item[Encryption algorithm $y \exec \Enc_k(x)$:] on input of message $x \in \msgSpace$ and key $k \in \kSpace$, the encryption algorithm splits x into $\mu$ $m$-bit blocks $x_1, \ldots, x_\mu$. For each block $x_i$, the encryption algorithm samples a new $\tau$-bit string $r_i\rand\bin^\tau$ uniformly at random, and outputs $y_i = \pi_k(x_i\|r_i)$ ($\|$ denotes string concatenation). The ciphertext is $y = y_1 \| \ldots \| y_\mu$.
 \item[Decryption algorithm $x \exec \Dec_k(y)$:] on input of ciphertext $y \in \C$ and key $k \in \kSpace$, the decryption algorithm first splits $y$ into $\mu$ $m+\tau$-bit blocks $y_1, \ldots, y_\mu$. Then, it runs $x'_i = (\pi^{-1}_k(y_i))_m$ for each block (where $(s)_m$ refers to taking the first $m$ bits of bit string $s$). It returns the plaintext $x' = x'_1, \ldots, x'_\mu$.
\end{description}
\end{construction}

The soundness of the construction can be checked easily. For the security, we observe that splitting a $\mu m$-qubit plaintext state into $\mu$ blocks of $m$-qubits
can introduce entanglement between the blocks. We will address this issue through the following technical lemma.

\begin{lemma}\label{lemma:diamond}
Let $\mathcal{E}$ be the quantum channel that takes as input an arbitrary $m$-qubit state, attaches another $\tau$ qubits in state $\ket{0}$, and then applies a permutation picked uniformly at random from $S_{2^{m+\tau}}$ to the computational basis space. Let $\mathcal{T}$ be the constant channel which maps any $m$-qubit state to the totally mixed state on $m+\tau$ qubits. Then, $\| \mathcal{E} - \mathcal{T} \|_{\diamond} \leq 2^{-\tau+2}$.
\end{lemma}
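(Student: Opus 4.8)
The plan is to evaluate the diamond norm straight from its definition,
$\|\mathcal{E}-\mathcal{T}\|_{\diamond}=\sup_{\rho_{AR}}\left\|\left((\mathcal{E}-\mathcal{T})\otimes\one_R\right)(\rho_{AR})\right\|_1$,
where $A$ is the $m$-qubit input register and $R$ an arbitrary reference (it suffices to range over pure $\rho_{AR}$ by convexity of the trace norm). The starting observation is that $\mathcal{E}$ factors as the isometry $V:\ket{a}\mapsto\ket{a}\ket{0^\tau}$ attaching the ancilla, followed by the \emph{permutation twirl} $\mathcal{P}(X)=\mathbb{E}_{\pi}[P_\pi X P_\pi^\dagger]$ over $S_{2^{m+\tau}}$ on $\mathbb{C}^{d}$, $d=2^{m+\tau}$, with $P_\pi$ the basis-permutation matrix. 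So I would first understand $\mathcal{P}$ in isolation, then propagate it through the untouched reference.

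The key algebraic input is that the commutant of the permutation representation of $S_d$ on $\mathbb{C}^d$ is two-dimensional, spanned by the rank-one projector $\Pi_u=\proj{u}$ onto the uniform vector $\ket{u}=\frac{1}{\sqrt{d}}\sum_i\ket{i}$ and the complement $\Pi_\perp=\one-\Pi_u$. By Schur's lemma the twirl is the orthogonal projection onto this commutant, so $\mathcal{P}(X)=\tr(X\Pi_u)\,\Pi_u+\frac{\tr(X)-\tr(X\Pi_u)}{d-1}\,\Pi_\perp$, which one verifies by matching $\tr(\,\cdot\,\Pi_u)$ and $\tr(\,\cdot\,)$ and using $P_\pi\ket{u}=\ket{u}$. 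Lifting this linear formula through $R$ and applying it to $\sigma_{A'R}=(V\otimes\one_R)\rho_{AR}(V\otimes\one_R)^\dagger$ gives the block form $\Pi_u\otimes\tau_u+\frac{1}{d-1}\Pi_\perp\otimes(\rho_R-\tau_u)$, where $\rho_R=\tr_A\rho_{AR}$ (using that $V$ is an isometry so $\tr_{A'}\sigma_{A'R}=\rho_R$) and $\tau_u=\bra{u}\sigma_{A'R}\ket{u}$.

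Subtracting $(\mathcal{T}\otimes\one_R)(\rho_{AR})=\frac{1}{d}\one\otimes\rho_R$ and regrouping along $\one=\Pi_u+\Pi_\perp$, the difference becomes $\Pi_u\otimes(\tau_u-\frac{\rho_R}{d})+\Pi_\perp\otimes\frac{\rho_R-d\tau_u}{d(d-1)}$. Since $\Pi_u\perp\Pi_\perp$, these two terms have orthogonal supports, so the trace norm is \emph{additive} over them; using multiplicativity of $\|\cdot\|_1$ under tensor products (with $\|\Pi_u\|_1=1$, $\|\Pi_\perp\|_1=d-1$) together with $\rho_R-d\tau_u=-d(\tau_u-\frac{\rho_R}{d})$, everything collapses to $2\,\|\tau_u-\frac{\rho_R}{d}\|_1$. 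A short computation with $V^\dagger\ket{u}=2^{-\tau/2}\ket{u_N}$ (where $\ket{u_N}$ is the uniform vector on $A$) then shows $\tau_u=2^{-\tau}\eta$ with $\eta:=\bra{u_N}\rho_{AR}\ket{u_N}\succeq 0$ and $\tr\eta\le 1$, reducing the whole quantity to $2^{-\tau+1}\,\|\eta-\frac{\rho_R}{N}\|_1$ with $N=2^m$.

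The remaining estimate is immediate from the triangle inequality, $\|\eta-\frac{\rho_R}{N}\|_1\le\tr\eta+\frac{1}{N}\tr\rho_R\le 1+\frac{1}{N}\le 2$, uniformly in $\rho_{AR}$, giving $\|\mathcal{E}-\mathcal{T}\|_{\diamond}\le 2^{-\tau+2}$. I expect the main obstacle to be the bookkeeping in the middle two steps: correctly propagating the twirl through the reference register and, crucially, justifying that the trace norm splits \emph{exactly} across the $\Pi_u$/$\Pi_\perp$ blocks (block orthogonality) rather than merely obeying the triangle inequality. Once that additivity is in place, the characteristic factor of $2$ and the clean cancellation down to the single scalar-type quantity $\|\eta-\frac{\rho_R}{N}\|_1$ fall out, and the final bound is elementary.
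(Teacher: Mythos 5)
Your proof is correct, and it reaches the bound by a genuinely different route than the paper. You obtain the structure of the averaged channel from representation theory: since the permutation representation of $S_{2^{m+\tau}}$ on $\mathbb{C}^{2^{m+\tau}}$ is multiplicity-free (trivial $\oplus$ standard), Schur's lemma gives the closed-form twirl $\mathcal{P}(X)=\tr(X\Pi_u)\,\Pi_u+\frac{\tr(X)-\tr(X\Pi_u)}{d-1}\,\Pi_\perp$, and the exact orthogonal-block decomposition of $(\mathcal{E}-\mathcal{T})\otimes\one_R$ then yields \emph{exact} trace-norm additivity across the $\Pi_u$/$\Pi_\perp$ blocks, collapsing the diamond norm to $2^{-\tau+1}\bigl\|\eta-\rho_R/2^m\bigr\|_1$. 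The paper instead purifies the input, explicitly averages $\ketbra{\pi(y\|0)}{\pi(y'\|0)}$ over all $\pi$ (separating the $y=y'$ terms, which give the totally mixed state, from the $y\neq y'$ residue $\chi_{XC}$), factors the residue as $\chi_X\otimes\chi_C$, computes $\|\chi_C\|_{\tr}=2^{-m-\tau+1}$ from its eigenvalues, and bounds $\|\chi_X\|_{\tr}\leq 2^m+1$ by the triangle inequality plus Cauchy--Schwarz. The two arguments are secretly bounding the identical quantity --- your $\eta$ is the paper's $\xi_X$ scaled by $2^{-m}$, your $\rho_R$ is its $\xi'_X$, and both end at $2^{-\tau+1}(1+2^{-m})\leq 2^{-\tau+2}$ --- but what each buys is different: your version needs no amplitude bookkeeping, applies verbatim to arbitrary (mixed, entangled) inputs without the convexity reduction you mention, and generalizes immediately to any multiplicity-free twirl; the paper's version is elementary and self-contained, requiring no representation theory. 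The one step you rightly flagged as crucial --- additivity of the trace norm over the orthogonal blocks --- is indeed valid, since the two terms $\Pi_u\otimes(\cdot)$ and $\Pi_\perp\otimes(\cdot)$ have orthogonal left and right supports, so their singular values simply concatenate.
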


\begin{proof}
In order to consider the fact that the $m$-qubit input state might be entangled with something else, we have to start with a purification of such a state. Formally, this is a bipartite pure $2m$-qubit state $\ket{\phi}_{XY} = \sum_{x,y} \alpha_{x,y} \ket{x}_X \ket{y}_Y$ whose $m$-qubit $Y$ register is input into the channel and gets transformed into $id_X \otimes \mathcal{E}( \proj{\phi} ) = \tr_{\Pi} \proj{\psi}$ where  
$$
\ket{\psi} = \sum_{x \in \{0,1\}^m,y \in \{0,1\}^m,\pi \in S_{2^{m+\tau}}} \alpha_{x,y} \ket{x}_X \ket{\pi(y || 0)}_C \ket{\pi}_{\Pi} \, .
$$
By definition of the diamond-norm, we have to show that for any $2m$-qubit state $\rho$, we have that $\|(id \otimes \mathcal{E})(\rho) - (id \otimes \mathcal{T})(\rho)\|_{\tr} \leq 2^{-\tau+2}$. Due to the convexity of the trace distance, we may assume that $\rho = \proj{\phi}$ is pure with $\ket{\phi}_{XY} = \sum_{x,y} \alpha_{x,y} \ket{x}_X \ket{y}_Y$. Hence, we obtain
\begin{align*}
(id_X \otimes \mathcal{E})( \proj{\phi} ) &= \tr_{\Pi} \proj{\psi} \\
&=\frac{1}{2^{m+\tau}!} \sum_{x,x',y,y',\pi} \alpha_{x,y} \overline{\alpha_{x',y'}} \ketbra{x}{x'}_X \otimes \ket{\pi(y \| 0)}\bra{ \pi(y' \| 0)}_C\\
&=\frac{1}{2^{m+\tau}!} \sum_{x,x',y} \alpha_{x,y} \overline{\alpha_{x',y}} \ketbra{x}{x'}_X \otimes \sum_\pi \ket{\pi(y \| 0)}\bra{ \pi(y \| 0)}_C\\
&\quad + \frac{1}{2^{m+\tau}!} \sum_{x,x',y \neq y'} \alpha_{x,y} \overline{\alpha_{x',y'}} \ketbra{x}{x'}_X \otimes \sum_\pi \ket{\pi(y \| 0)}\bra{ \pi(y' \| 0)}_C \\
&=\sum_{x,x',y} \alpha_{x,y} \overline{\alpha_{x',y}} \ketbra{x}{x'}_X \otimes \frac{1}{2^{m+\tau}} \sum_z \ketbra{z}{z}_C\\
&\quad + \sum_{x,x',y \neq y'} \alpha_{x,y} \overline{\alpha_{x',y'}} \ketbra{x}{x'}_X \otimes \frac{1}{2^{m+\tau}(2^{m+\tau}-1)}\sum_{z \neq z'} \ketbra{z}{z'}_C\\
&=\tr_Y \proj{\phi} \otimes \tau_C + \chi_{XC}\\
&=(id_X \otimes \mathcal{T}) (\proj{\phi}) + \chi_{XC} \, ,
\end{align*}
where we defined the ``difference state'' 
$$\chi_{XC} := \sum_{x,x',y \neq y'} \alpha_{x,y} \overline{\alpha_{x',y'}} \ketbra{x}{x'}_X \otimes \frac{1}{2^{m+\tau}(2^{m+\tau}-1)}\sum_{z \neq z'} \ketbra{z}{z'}_C \, .$$

In order to conclude, it remains to show that $\| \chi_{XC} \|_{\tr} \leq 2^{-\tau+2}$. For the $C$-register $\chi_C = \frac{1}{2^{m+\tau}(2^{m+\tau}-1)}\sum_{z \neq z'} \ketbra{z}{z'}_C$, one can verify that the $2^{m+\tau}$ eigenvalues are $(c \cdot (2^{m+\tau}-1),-c,-c,\ldots,-c)$ where $c:=\frac{1}{2^{m+\tau}(2^{m+\tau}-1)}$. Hence, the trace norm (which is the sum of the absolute eigenvalues) is exactly $c \cdot 2(2^{m+\tau}-1) = 2^{-m-\tau+1}$. 

\medskip

For the $X$-register, we split $\chi_X$ into two parts $\chi_X = \xi_X - \xi'_X$ where
\begin{align*}
\xi_X &:= \sum_{x,x'} \ketbra{x}{x'} \sum_{y,y'} \alpha_{x,y} \overline{\alpha_{x',y'}} \, ,\\
\xi'_X &:= \sum_{x,x'} \ketbra{x}{x'} \sum_{y} \alpha_{x,y} \overline{\alpha_{x',y}} \, ,
\end{align*}
and use the triangle inequality for the trace norm $\|\chi_X\|_{\tr} = \|\xi_X - \xi'_X\|_{\tr} \leq \|\xi_X\|_{\tr} + \|\xi'_{X}\|_{\tr}$. Observe that $\| \xi_X \|_{\tr} = \| \sum_{x,y} \alpha_{x,y} \ket{x} \sum_{x',y'} \overline{\alpha_{x',y'}} \bra{x'} \|_{\tr} = \| \proj{s} \|_{\tr}$ for the (non-normalized) vector $\ket{s} := \sum_{x,y} \alpha_{x,y} \ket{x}$. Hence, the trace-norm $\| \xi_X \|_{\tr} = | \braket{s\mid s} | = \sum_x | \sum_y \alpha_{x,y} |^2 \leq \sum_x \sum_y |\alpha_{x,y}|^2 \cdot 2^m = 2^m$ by the Cauchy-Schwarz inequality and the normalization of the $\alpha_{x,y}$'s. Furthermore, we note that $\xi'_X$ is exactly the reduced density matrix of $\ket{\phi}_{XY}$ after tracing out the $Y$ register. Hence, $\xi'_X$ is positive semi-definite and its trace norm is equal to its trace which is 1. In summary, we have shown that
\begin{align*}
\| \chi_{XC} \|_{\tr} &= \| \chi_X \|_{\tr} \cdot \| \chi_C \|_{\tr} \leq (\|\xi_X - \xi'_X \|_{\tr} ) \cdot 2^{-m-\tau+1}\\
&\leq (\|\xi_X\|_{\tr} + \| \xi'_X \|_{\tr} ) \cdot 2^{-m-\tau+1} \leq (2^m + 1) \cdot 2^{-m-\tau+1} \leq 2^{-\tau+2} \, .
\end{align*}
\qed
\end{proof}

If we consider a slightly different encryption channel $\mathcal{E}^T$ which still maps $m$ qubits to $m+\tau$ qubits but where the permutation $\pi$ is not picked uniformly from $S_{2^{m+\tau}}$, but instead we are guaranteed that a certain set $T \subset \{0,1\}^{m+\tau}$ of outputs never occurs, we can consider such permutations w.l.o.g. as picked uniformly at random from a smaller set $S_{2^{m+\tau}-|T|}$. In this setting, we are interested in the distance of the encryption operation $\mathcal{E}^T$ from the slightly different constant channel $\mathcal{T}^T$ which maps all inputs to the $(m+\tau)$-qubit state which is completely mixed on the smaller set $\{0,1\}^{m+\tau} \setminus T$. By modifying slightly the proof of Lemma~\ref{lemma:diamond} we get the following.

\begin{corollary}\label{cor:diamond}
Let $\mathcal{E}^T$ and $\mathcal{T}^T$ be the channels defined above. Then, 
\begin{equation}\label{eq:corbound}
\| \mathcal{E}^T - \mathcal{T}^T \|_{\diamond} \leq \frac{4}{2^{\tau} - |T|/2^m} \, .
\end{equation}
\end{corollary}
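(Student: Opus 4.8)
The plan is to re-run the proof of Lemma~\ref{lemma:diamond} essentially verbatim, changing only the one step where we average over the random permutation. Write $N := 2^{m+\tau}$ and $N' := N - |T|$ for the number of allowed output positions, and recall that $\mathcal{E}^T$ now draws $\pi$ uniformly from $S_{N'}$, the permutations of $\{0,1\}^{m+\tau}\setminus T$ (we may assume the relevant inputs $y\|0$ lie in this set). Starting again from the purification $\ket{\phi}_{XY} = \sum_{x,y}\alpha_{x,y}\ket{x}_X\ket{y}_Y$, the only quantities that change are the two permutation averages. For a fixed input $y\|0$ the image $\pi(y\|0)$ is uniform over the $N'$ allowed positions, and for $y\neq y'$ the pair $(\pi(y\|0),\pi(y'\|0))$ is a uniformly random ordered pair of distinct allowed positions, so
\[
\frac{1}{N'!}\sum_{\pi}\proj{\pi(y\|0)}_C = \frac{1}{N'}\sum_{z\notin T}\proj{z}_C,
\qquad
\frac{1}{N'!}\sum_{\pi}\ketbra{\pi(y\|0)}{\pi(y'\|0)}_C = \frac{1}{N'(N'-1)}\sum_{\substack{z\neq z'\\ z,z'\notin T}}\ketbra{z}{z'}_C .
\]
The first expression is exactly the completely mixed state on $\{0,1\}^{m+\tau}\setminus T$, so the diagonal ($y=y'$) contribution reproduces $(id_X\otimes\mathcal{T}^T)(\proj{\phi})$, and the off-diagonal part is the new difference state $\chi_{XC}$, which again factorizes as $\chi_X\otimes\chi_C$.

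First I would recompute the trace norm of the $C$-part $\chi_C = \frac{1}{N'(N'-1)}\sum_{z\neq z'}\ketbra{z}{z'}_C$, now supported on the $N'$-dimensional subspace spanned by $\{\ket{z}:z\notin T\}$. Writing $\sum_{z\neq z'}\ketbra{z}{z'} = \mathbb{J}-\one$ for the all-ones matrix $\mathbb{J}$ on that subspace, its eigenvalues are $N'-1$ (once) and $-1$ (with multiplicity $N'-1$); scaling by $\frac{1}{N'(N'-1)}$ yields one eigenvalue $\frac{1}{N'}$ and $N'-1$ eigenvalues $-\frac{1}{N'(N'-1)}$, hence $\|\chi_C\|_{\tr} = \frac{2}{N'} = \frac{2}{2^{m+\tau}-|T|}$. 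This is the only genuinely new computation.

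The $X$-register factor is untouched: the splitting $\chi_X = \xi_X - \xi'_X$ together with the Cauchy--Schwarz bound $\|\xi_X\|_{\tr}\le 2^m$ and the observation that $\xi'_X$ is a reduced density matrix (so $\|\xi'_X\|_{\tr}=1$) still give $\|\chi_X\|_{\tr}\le 2^m+1$. Combining the two factors and simplifying,
\[
\|\chi_{XC}\|_{\tr} = \|\chi_X\|_{\tr}\cdot\|\chi_C\|_{\tr}
\le \frac{2(2^m+1)}{2^{m+\tau}-|T|}
= \frac{2(1+2^{-m})}{2^{\tau}-|T|/2^m}
\le \frac{4}{2^{\tau}-|T|/2^m},
\]
where the last inequality uses $1+2^{-m}\le 2$. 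Since this holds for every purification $\ket{\phi}_{XY}$ and, by convexity of the trace distance, for every input state, it bounds the diamond norm and establishes~\eqref{eq:corbound}.

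I do not expect a serious obstacle here, since the argument is structurally identical to Lemma~\ref{lemma:diamond}. The one point that must be handled carefully is the second-moment behaviour of a permutation restricted to avoid $T$: one has to verify that conditioning the image to land outside $T$ keeps the single image uniform over the $N'$ allowed positions and the ordered pair of distinct images uniform over allowed ordered pairs, so that the normalizations $\tfrac{1}{N'}$ and $\tfrac{1}{N'(N'-1)}$ are the correct ones. Once that is settled, the eigenvalue count for $\chi_C$ and the final algebraic simplification with $1+2^{-m}\le 2$ are the only remaining steps.
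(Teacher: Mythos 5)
Your proposal is correct and is precisely the ``slight modification'' of the proof of Lemma~\ref{lemma:diamond} that the paper invokes without spelling out: the only changes are the first- and second-moment averages of $\pi$ over $S_{2^{m+\tau}-|T|}$, the resulting trace norm $\|\chi_C\|_{\tr} = 2/(2^{m+\tau}-|T|)$, and the final simplification using $1+2^{-m}\le 2$, all of which you carry out correctly while reusing the $X$-register bound unchanged.
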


We can now prove the security of Construction~\ref{const:extension}. We give the proof for gqIND-qCPA, and then qIND-qCPA follows immediately from Theorem~\ref{theo:qind-gqind}.

\begin{theorem}[gqIND-qCPA security of Construction~\ref{const:extension}]\label{thm:extension}
If $\Pi_{m+\tau}$ is a family of quantum-secure pseudorandom permutations (qPRP), then the encryption scheme $(\Gen,\Enc,\Dec)$ defined in Construction~\ref{const:extension} is gqIND-qCPA secure.
\end{theorem}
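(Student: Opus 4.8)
The plan is to prove the statement in two stages: first replace the quantum-secure pseudorandom permutation $\pi_k$ by a truly uniform permutation on $\{0,1\}^{m+\tau}$, and then show that under a uniform permutation the challenge ciphertext is statistically close to a fixed state that does not depend on the challenge bit $b$ (nor on which of the two plaintexts was encrypted). Indistinguishability of the two challenge plaintexts then follows; since the statement is phrased for gqIND-qCPA, the corresponding qIND-qCPA claim is immediate from Theorem~\ref{theo:qind-gqind}, so it suffices to establish the gqIND-qCPA bound.

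For the first stage I would give a reduction to the qPRP assumption. Given an adversary \A against the gqIND-qCPA game, I build a distinguisher that has quantum oracle access to a permutation oracle and its inverse, where the oracle is either $\pi_k$ or a uniformly random $\pi\in S_{2^{m+\tau}}$. The distinguisher simulates the entire game: every qCPA learning query and the challenge encryption are type-(2) operations acting block-by-block as $\ket{x_i}\ket{0^\tau}\mapsto\ket{\pi(x_i\|r_i)}$, each of which can be implemented from the forward oracle together with the inverse oracle by computing the image into an ancilla and uncomputing the garbage register (this is exactly the type-(1)/type-(2) equivalence of Figure~\ref{fig:equiv12}), drawing fresh randomness $r_i$ for every block of every query. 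Because this simulation is perfect, the change in \A's advantage between the real game and the random-permutation game is bounded by the qPRP distinguishing advantage and is therefore $\negl{n}$.

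For the second stage I work with a uniform $\pi$ and argue that the challenge ciphertext is almost independent of $b$. First I condition on the event that all randomness strings used across the polynomially many learning queries and the $\mu$ challenge blocks are pairwise distinct, which fails with probability at most $\binom{\poly{n}}{2}2^{-\tau}=\negl{n}$. Conditioned on this event, the inputs touched by distinct queries and blocks lie in disjoint slabs $\{\,*\,\|\,r\,\}$, so $\pi$ sends them to disjoint output sets; hence, after the learning phase, $\pi$ restricted to the untouched challenge slabs is a uniform injection into the complement of the set $T$ of outputs already consumed by the learning queries. I then run a hybrid over the $\mu$ challenge blocks, where hybrid $H_i$ replaces the first $i$ blocks by the constant channel $\mathcal{T}^{T}$ and encrypts the remaining blocks honestly. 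The step from $H_{i-1}$ to $H_i$ replaces the honest encryption of block $i$ — a uniform injection avoiding the outputs consumed by the learning queries and the first $i-1$ blocks, i.e.\ a set $T$ with $|T|/2^m\le \poly{n}+i-1$ — by the channel $\mathcal{T}^{T}$, and Corollary~\ref{cor:diamond} bounds this transition in diamond norm by $\tfrac{4}{2^{\tau}-\poly{n}-(i-1)}$. Summing the $\mu$ steps gives total distance $\tfrac{4\mu}{2^{\tau}-\poly{n}-\mu}=\negl{n}$ for $\tau=\omega(\log n)$, and because the diamond norm controls distinguishability even when the encrypted register is entangled with \A's memory and with the other message blocks, the whole challenge channel is $\negl{n}$-close to a constant channel. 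As its output is independent of $b$, \A's advantage is $\negl{n}$.

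The main obstacle is the bookkeeping of the hybrid in the presence of the single shared permutation: I must argue that conditioning on the learning-phase transcript and on the previously processed blocks genuinely leaves $\pi$ acting as a fresh uniform injection on the current block's slab, so that Corollary~\ref{cor:diamond} applies with exactly the right forbidden set $T$, and I must be careful to use the diamond norm rather than the plain trace norm so that entanglement of the challenge register with \A's side information and across blocks is handled correctly. The collision bound on the randomness and the requirement that $\tau$ be superlogarithmic (so that $\tfrac{4\mu}{2^{\tau}-\poly{n}}$ is negligible) are the remaining technical points.
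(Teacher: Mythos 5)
Your proposal follows essentially the same route as the paper's proof: a game hop from the qPRP to a truly uniform permutation, a conditioning step excluding randomness collisions, and then a block-by-block hybrid in which each honestly encrypted block is replaced by a constant (maximally mixed) channel via Corollary~\ref{cor:diamond}, with the forbidden set $T$ accounting for already-consumed outputs and the diamond norm handling entanglement across blocks and with the adversary's memory. The paper organizes this as Games~0--3 (Game~3 even hands the used randomness to the adversary for free) and applies Corollary~\ref{cor:diamond} $\mu$ times with the conservative bound $|T| = q\mu 2^m$; your hybrid with $|T|/2^m \le \poly{n}+i-1$ is the same argument, stated slightly more explicitly.

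The one substantive deviation is in the first game hop. Your reduction queries both the permutation and its inverse in order to implement the type-(2) encryption operations via the uncomputation trick of Figure~\ref{fig:equiv12}; that is a reduction to a \emph{strong} qPRP. The theorem as stated assumes only a qPRP, and the definition in Appendix~\ref{app:defs} grants superposition access to $\pi$ alone; indeed the paper remarks explicitly there that strong qPRPs are \emph{not} required, because the type-(2) transformations are implemented by the challenger and the adversary never receives decryption access, so ``$\pi^{-1}$ is not needed by the reduction.'' As written, your stage~1 therefore proves the statement under a formally stronger hypothesis than the theorem's. Whether this is a real gap hinges on how one reads ``oracle access in superposition'' in the qPRP definition: under the standard XOR-type (type-(1)) reading, the type-(2) challenger genuinely cannot be simulated from forward access alone, and your strong-qPRP detour is the honest fix; under a minimal-oracle (type-(2)) reading, the forward oracle suffices and your use of $\pi^{-1}$ is unnecessary. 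You should either restate your reduction using minimal-oracle access to $\pi$ (matching the paper's claim), or state explicitly that your proof assumes a strong qPRP.
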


\begin{proof}

We want to show that no QPT distinguisher \D can win the gqIND-qCPA game with probability substantially better than guessing. We first transform the game through a short game-hopping sequence into an indistinguishable game for which we can bound the success probability of any such \D.

\subheading{Game 0.} This is the original gqIND-qCPA game.

\subheading{Game 1.} This is like Game 0, but instead of using a permutation drawn from the qPRP family $\Pi_{m+\tau}$, a random permutation $\pi \in S_{2^{m+\tau}}$ is chosen from the set of all permutations over $\bin^{m+\tau}$. The difference in the success probability of \D winning one or the other of these two games is negligible. Otherwise, we could use \D to distinguish a random permutation drawn from $\Pi_{m+\tau}$ from one drawn from $S_{2^{m+\tau}}$. This would contradict the assumption that $\Pi_{m+\tau}$ is a qPRP.

\subheading{Game 2.} This is like Game 1, but \D is guaranteed that the randomness used for each encryption query are $\mu$ new random $\tau$-bit strings that were not used before. In other words, the challenger keeps track of all random values used so far and excludes those when sampling a new randomness. Since in Game 1 the same randomness is sampled twice only with negligible probability, the probability of winning these two games differs by at most a negligible amount.

\subheading{Game 3.} This is like Game 2 except that the answer to each query asked by \D also contains the randomness $r_1,\ldots,r_{\mu}$ used by the challenger for answering that query. Clearly, \D's probability of winning this game is at least the probability of winning Game 2. 

When the modified gqIND game~3 starts, \D chooses two different plaintext states and sends them to the challenger, who will then choose one of them and send it back encrypted with fresh randomness $\hat{r}_1, \ldots,\hat{r}_{\mu}$. Let $Q$ denote the set of $q \cdot \mu = poly(n)$ query values used during the previous qCPA-phase. We have to consider that from this phase, \D knows a set $T \subset \{0,1\}^{m+\tau}$ of 'taken' outputs, i.e. he knows that any $\pi(x\|\hat{r}_i)$ will not take one of these values as $\hat{r}_i$ has not been used before. So, from the adversary's point of view, $\pi$ is a permutation randomly chosen from $S'$, the set of those permutations over $\bin^{m+\tau}$ that fix these $|T|$ values. In order to simplify the proof, we will consider a very conservative bound where $|T| = q \cdot \mu \cdot 2^m$, and the size of $S'$ is $|S'| = (2^{m+\tau}-|T|)!$ (notice that this bound is very conservative because it assumes that the adversary learns $2^m$ different (classical) ciphertexts for every of the $q \cdot \mu$ `taken' randomnesses, but as we will see, this knowledge will be still insufficient to win the game.)

By construction, the encryption of a $\mu m$-qubit (possibly mixed) state $\sigma$ is performed in $\mu$ separate blocks of $m$ qubits each. We are guaranteed that fresh randomness is used in each block, hence it follows from Corollary~\ref{cor:diamond} that $\Enc_k(\sigma)$ is negligibly close to the ciphertext state where the first $m+\tau$ qubits are replaced with the completely mixed state (by noting that $|T|/2^m = q \cdot \mu$ is polynomial in $n$ in our case, and hence the right-hand side of~\eqref{eq:corbound} is negligible.).
Another application of Corollary~\ref{cor:diamond} gives negligible closeness to the ciphertext state where the first $2(m+\tau)$ qubits are replaced with the completely mixed state etc. After $\mu$ applications of Corollary~\ref{cor:diamond}, we have shown that $\Enc_k(\sigma)$ is negligibly close to the totally mixed state on $\mu(m+\tau)$ qubits. As this argument can be made for any cleartext state $\sigma$, we have shown that from $\D$'s point of view, all encrypted states are negligibly close to the totally mixed state and therefore cannot be distinguished.
\qed
\end{proof}

\begin{corollary}[qIND-qCPA security of Construction~\ref{const:extension}]\label{cor:extension}
If $\Pi_{m+\tau}$ is a family of quantum-secure pseudorandom permutations (qPRP), then the encryption scheme $(\Gen,\Enc,\Dec)$ defined in Construction~\ref{const:extension} is qIND-qCPA secure.
\end{corollary}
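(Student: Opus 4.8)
The plan is to obtain this corollary as an immediate consequence of the two results already established, so very little new work is required. First I would invoke Theorem~\ref{thm:extension}, which states that under the assumption that $\Pi_{m+\tau}$ is a qPRP, the scheme $(\Gen,\Enc,\Dec)$ of Construction~\ref{const:extension} is gqIND-qCPA secure. This is the substantive step, and all the real difficulty---the game-hopping sequence replacing the qPRP by a truly random permutation, the bookkeeping of previously used randomness, and the $\mu$-fold application of Corollary~\ref{cor:diamond} showing that each encrypted block is negligibly close to the totally mixed state---has already been carried out there. Since the hypothesis of the corollary is identical to the hypothesis of that theorem, I can simply transport its conclusion.

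Second I would appeal to the generic implication gqIND-qCPA $\Rightarrow$ qIND-qCPA established in Theorem~\ref{theo:qind-gqind}, which holds unconditionally for every symmetric-key encryption scheme. As recalled on Page~\pageref{cmodel}, the reason is that the quantum plaintext states admitting an efficient classical description (the only ones allowed in the qIND challenge phase) form a strict subclass of the arbitrary quantum plaintext states allowed in the gqIND challenge phase; hence any adversary attacking the qIND-qCPA game is in particular a restricted adversary against the gqIND-qCPA game, and security against the broader notion entails security against the narrower one. Because both notions share the same qCPA learning phase, no separate argument about the learning phase is needed.

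Chaining these two facts yields the claim directly: since $\Pi_{m+\tau}$ is a qPRP, Construction~\ref{const:extension} is gqIND-qCPA secure by Theorem~\ref{thm:extension}, and therefore qIND-qCPA secure by Theorem~\ref{theo:qind-gqind}. I do not expect any genuine obstacle; the only point worth verifying is that the hypotheses line up exactly---both theorems concern the same scheme under the same qPRP assumption, and the implication of Theorem~\ref{theo:qind-gqind} is free of side conditions---so the proof reduces to this short composition of previously proven statements.
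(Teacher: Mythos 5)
Your proposal is correct and matches the paper's own reasoning exactly: the paper states just before Theorem~\ref{thm:extension} that the proof is given for gqIND-qCPA and that qIND-qCPA ``follows immediately from Theorem~\ref{theo:qind-gqind}'', which is precisely your two-step composition. No further work is needed.
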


\section{Conclusions and Further Directions}\label{sec:conclusions}

We believe that many of the current security notions used in different areas of cryptography are unsatisfying in case quantum computers become reality. In this respect, our work contributes to a better understanding of which properties are important for the long-term security of modern cryptographic primitives. Our work leads to many interesting follow-up questions.

There are many other directions to investigate, once the basic framework of `indistinguishability versus semantic security' presented in this work is completed. A natural direction is to look at quantum CCA1 security in this framework. This topic was also initiated in~\cite{Boneh2013} relative to the IND-qCPA model; it would be interesting to extend the definition of CCA1 security to stronger notions obtained by starting from our qIND-qCPA model.

In Section~\ref{sec:qIND} we left open the interesting question on whether it is possible at all to find a separating example between the notions of qIND and gqIND. That is, find a symmetric-key encryption scheme $\E$ which is qIND-secure, but not gqIND-secure. Finding such an example (or provable lack of) would shed further light on the security model we consider.

We have so far not taken into account models where the adversary is allowed to initialize the ancilla qubits used in the encryption operation used by the challenger (i.e. the $\ket{y}$ in $\ket{x,y} \mapsto \ket{x,y \xor \Enc_k(x)}$). These models lead to the study of {\em quantum fault attacks}, because they model cases where the adversary is able to `watermark' or tamper with part of the challenger's internal memory. Moreover, we have not considered superpositions of keys or randomness: these lead to a quantum study of {\em weak-key} and {\em bad-randomness} models. The authors of this paper are not aware of any results in these directions.

One outstanding open problem is to define CCA2 (adaptive chosen ciphertext attack) security in the quantum world. The problem is that in the CCA2 game the challenger has to ensure that the attacker does not ask for a decryption of the actual challenge ciphertext leading to a trivial break. While this is easily implemented in the classical world, it raises several issues in the quantum world. What does it mean for a ciphertext to be different from the challenge ciphertext? And, more importantly: \emph{How can the challenger check?} There might be several reasonable ways to solve the first issue but, as long as the queries are not classical, we are not aware of any possibility to solve the second issue without disturbing the challenge ciphertext and the query states.

Our secure construction shows how to turn block ciphers into qIND-qCPA secure schemes. An interesting research question is whether there exists a general patch transforming an IND-qCPA secure scheme into a qIND-qCPA secure one. It is also important to study how our transformation can be applied to modes of operation different from Construction~\ref{const:extension}.

\subheading{Acknowledgements.} The authors would like to thank Ronald de Wolf and Boris {\v{S}}kori{\'{c}} for helpful discussions, the anonymous reviewers for useful comments,  Marco Tomamichel for the bibstyle, and the organizers of the Dagstuhl Seminar 15371 ``Quantum Cryptanalysis'' for providing networking and useful interactions and support. T.G. was supported by the German Federal Ministry of Education and Research (BMBF) within EC SPRIDE and CROSSING. A.H. was supported by the Netherlands Organisation for Scientific Research (NWO) under grant 639.073.005 and the Commission of the European Communities through the Horizon 2020 program under project number 645622 PQCRYPTO. C.S. was supported by a 7th framework EU SIQS grant and a NWO VIDI grant. Part of this work was supported by the COST Action IC1306.

\bibliographystyle{alphaarxiv}
\bibliography{local_bib}

\appendix

\newpage

\section{Formal Definitons}\label{app:defs}
Here we give some formal definitions that we omitted in the main body as they are somewhat standard. We include them for the paper to be self-contained. We begin with detailed formal definitions for SEM-CPA and IND-CPA. Afterwards we define quantum-secure pseudorandom permutations.

\subheading{SEM-CPA and IND-CPA.} The following definitions are more precise than the ones we use in the main text. They are included here for reference and were taken from Goldreich~(\cite{Goldreich2004}).

\begin{definition}[SEM-CPA]
A secret-key encryption scheme, $(\Gen, \Enc, \Dec)$, is said to be semantically secure under chosen plaintext attacks iff for every pair of probabilistic polynomial-time oracle machines $\A_1$ and $\A_2$, there exists a pair of probabilistic polynomial-time algorithms $\A_1'$ and $\A_2'$ such that the following two conditions hold:
\begin{enumerate}
\item For every positive polynomial $p(\cdot)$, and all sufficiently large $n$ and $z \in \bin^{\poly{n}}$ it holds that
\begin{eqnarray}
&\pr\left[
\begin{array}{rl}
v = f_m(x) \qquad&{\rm where}\\
&k \exec \Gen(1^n)\\
&((S_m, h_m, f_m), \sigma ) \exec \A_1^{\Enc_k}(1^n, z)\\
&c \exec (\Enc_k(x), h_m(x)),{\rm where }\ x \exec S_m(U_{\poly{n}})\\
&v \exec \A_2^{\Enc_k}(\sigma, c)
\end{array}
\right]\nonumber\\
< &\pr\left[
\begin{array}{rl}
v = f_m(x) \qquad&{\rm where}\\
&((S_m, h_m, f_m), \sigma ) \exec \A_1'(1^n, z)\\
&x \exec S_m(U_{\poly{n}})\\
&v \exec \A_2'(\sigma, 1^{|x|},h_m(x))
\end{array}
\right] + \frac{1}{p(n)}
\end{eqnarray}
Recall that $(S_m, h_m, f_m)$ is a triplet of circuits consisting of a poly-sized circuit $S_m$ specifying a distribution over $m$-bit long plaintexts, a circuit computing an advise function $h_m:\bin^m\rightarrow\bin^*$, and a circuit computing a target function $f_m:\bin^m\rightarrow\bin^*$, and that $x$ is a sample from the distribution induced by $S_m$.
\item For every $n$ and $z$, the first elements (i.e., the $(S_m, h_m, f_m)$ part) in the random variables $\A_1'(1^n, z)$ and $A_1^{\Enc_{\Gen(1^n)}}(1^n, z)$ are identically distributed.
\end{enumerate}
\end{definition}

\begin{definition}[IND-CPA]
A secret-key encryption scheme, $(\Gen, \Enc, \Dec)$, is said to have indistinguishable encryptions under chosen plaintext attacks iff for every pair of probabilistic polynomial-time oracle machines, $\A_1$ and $\A_2$, for every positive polynomial $p(\cdot)$, and all sufficiently large $n$ and $z \in \bin^{\poly{n}}$ it holds that
$$\left|p_{n,z}^{(1)}-p_{n,z}^{(2)}\right| < \frac{1}{p(n)}$$
where
$$p_{n,z}^{(i)} \defas \pr \left[
\begin{array}{rl}
v = i &{\text where }\\
&k \exec \Gen(1^n)\\
&((x_{1},x_{2}), \sigma) \exec \A_1^{\Enc_k}(1^n, z)\\
&c \exec \Enc_k(x_i)\\
&v \exec \A_2^{\Enc_k}(\sigma, c)
\end{array}
\right]
$$
where $\left|x_1\right| = \left|x_2\right|$.
\end{definition}

Please note that there are no restrictions regarding \A's oracle queries, i.e. $\A_1$ as well as $\A_2$ are allowed to ask for encryptions of $x_1$ and $x_2$.

\subheading{Quantum PRP.} We now define quantum-secure pseudorandom permutation families. We restrict ourselves to efficient permutation families that have as domain binary strings of a certain length as these are the only ones we are using in this work. Let $S_{2^n}$ be the set of all permutations of $n$-bit strings.

\begin{definition}[Efficient Permutation Family]
Let $n \in \NN$, we call a family of permutations $\Pi_{n} = \{\pi_k:\bin^n\rightarrow\bin^n\} \subset S_{2^n}$ with key space $\kSpace_\Pi$ and domain $\bin^n$ efficient if there exists a triple of \ppttxt algorithms $(\init, \Pi, \Pi^{-1})$ such that: 
\begin{enumerate}
\item The initialization algorithm $\init(1^n)$ takes as input the parameter $n$ and outputs a random function key $k \rand \kSpace_\Pi$ from the key space. 
\item The function $\Pi$ takes as input a function key $k$ and a domain element $x$ and outputs $\pi_k(x)$. 
\item The function $\Pi^{-1}$ takes as input a function key $k$ and a domain element $x$ and outputs $\pi^{-1}_k(x)$.
\end{enumerate}
\end{definition}

We sometimes abuse notation and write $\pi$ instead of $\pi_k$ and $\pi \rand \Pi_{n}$ for the process of running $\init(1^n)$. A quantum-secure pseudorandom permutation family (qPRP) is an efficient permutation family that achieves the pseudorandomness property in presence of a quantum adversary that can query the permutation $\pi$ with superpositions of domain elements $x$. It is defined as follows:

\begin{definition}[Quantum PRP]
An efficient permutation family $\Pi_{n}$ is said to be a \emph{quantum-secure pseudorandom permutation family} if for every quantum polynomial-time oracle machine $\A$, it holds that
$$\left| \pr_{\pi\rand\Pi_{n}}\left[\A^{\ket{\pi}}(1^n) = 1\right] - \pr_{\pi\rand S_{2^n}} \left[\A^{\ket{\pi}}(1^n) = 1\right]\right| \leq \negl{n} \, ,$$
where the superscript $\ket{\cdot}$ denotes oracle access in superposition.
\end{definition}

Note that the permutations are chosen by the game. Hence, keys are classical. 

A permutation family $\Pi_n$ is called a \emph{strong quantum PRP}, if a random member of $\Pi_n$ is computationally indistinguishable from a uniform permutation even if the attacker $\A$ can query (in superposition) both the permutation $\pi$ and the inverse permutation $\pi^{-1}$. Notice that the construction in Theorem~\ref{thm:constr} does not require {\em strong} quantum PRPs. The reason is that, even if we are considering type-$(2)$ transformations (which could be used to compute $\pi^{-1}$), these transformations are implemented by the challenger, because we are in the $(\C)$ model. And since we only consider CPA scenarios here, and not CCA, the adversary is never granted access to the decryption oracle. Hence, $\pi^{-1}$ is not needed by the reduction.

\section{Example Encryption Scheme}\label{app:examples}

In this section we recall Construction 5.3.9 from \cite{Goldreich2004} which achieves IND-CPA security starting from a pseudorandom function family.

\begin{construction}[{\cite[Construction 5.3.9]{Goldreich2004}}]\label{const:goldreich}
Let $n\in\NN$ be the security parameter, $\tau,m \in \poly{n}$, $\ff = \left\{\F_k:\bin^\tau\rightarrow\bin^m \mid k \in \kSpace\right\}$ be a pseudorandom function family with key space \kSpace. Then the following triple of algorithms form a symmetric-key encryption scheme with message space $\bin^m$:
\begin{description}
 \item[\Gen($1^n$):] On input of the security parameter, returns a uniformly random key $k \rand \kSpace$ for the PRF $\ff$ as secret key.
 \item[\Enc($x,k$):] On input of message $x$ and key $k$ returns cipher text $c = (r,c')$ where randomness $r \rand \bin^\tau$ is a uniformly random $\tau$ bit string and $c'$ is computed as 
 $$c' \exec \F_k(r) \xor x.$$
 \item[\Dec($c,k$):] On input of cipher text $c = (r,c')$ and key $k$ returns plain text
 $$x \exec c' \oplus \F_k(r).$$ 
\end{description}
\end{construction}

\section{Proof of Theorem~\ref{theo:gqind-qindcpa2}}\label{app:BJ15equiv}

In this section we explain how the q-IND-CPA-2 indistinguishability notion for secret-key encryption of quantum messages introduced by Broadbent and Jeffery~\cite[Appendix~B]{BJ15} is equivalent to our gqIND-qCPA notion in the case that the encryption operation is a symmetric-key classical functionality operating in type-(2) mode. In~\cite{BJ15}, the authors study the definition of quantum indistinguishability relative to the case of {\em quantum fully homomorphic encryption}. The general definition of {\em quantum symmetric-key encryption scheme} has been formalized in~\cite{Alagic+16arxiv} in the following way.

\begin{definition}\label{def:QSKE}
A {\em quantum symmetric-key encryption scheme (or qSKE)} is a triple of quantum circuit families of polynomial depth:
\begin{enumerate}
\item (key generation) $Q.\Gen: 1^n \mapsto k \in \K$
\item (encryption) $Q.\Enc: \K \times \X \rightarrow \Y$
\item (decryption) $Q.\Dec: \K \times \Y \rightarrow \X$
\end{enumerate}
such that $\| Q.\Dec \circ Q.\Enc - \one_\X \|_\diamond \leq \negl{n}$
for all $k \in \supp{Q.\Gen(1^n)}$, where $\K$ is the {\em key space}, $\X$ is the {\em plainstate space}, $\Y$ is the {\em cipherstate space}, $\one$ is the identity operator, and $Q.\Dec,Q.\Enc$ must be intended acting with the same (classical) key $k$.
\end{definition}

Then the authors of~\cite{Alagic+16arxiv} define a notion of {\em quantum indistinguishability for quantum symmetric-key encryption schemes} (which they call IND, but which we relabel here as q-IND-qse for ease of reading) as follows.

\begin{definition}[q-IND-qse]\label{def:qINDqse}
A qSKE $(Q.\Gen, Q.\Enc, Q.\Dec)$ has {\em indistinguishable encryptions} (or is {\em q-IND-qse secure}) if for every QPT adversary $\A=(\M,\D)$ we have:
\begin{equation*}
\left| \Pr \left[ \;\D_{Q.\Enc}  (\rho_{ME})  = 1 \; \right] -
\Pr \left[ \; \D_{Q.\Enc}  (\ketbra{0}{0}_M \otimes \rho_E)  = 1 \;  \right] \right| \leq \negl{n}
\end{equation*}
where $\rho_{ME} \from \M$, $\rho_E = \tr_M(\rho_{ME})$, $\D_{Q.\Enc} = \D \circ (\Enc_{k} \otimes \one_E)$ and the probabilities are taken over $k \leftarrow Q.\Gen(1^n)$ and the internal randomness of \Enc, $\M$, and $\D$.
\end{definition}

Basically, the above definition states that for any QPT adversary $\A$, it must be hard to distinguish an encryption of any state $\rho_M$ from an encryption of $\ketbra{0}{0}_M$ (where $\rho_E$ is auxiliary information carried between the two parts $\M$ and $\D$ of $\A$). Once we add a quantum CPA phase ($\M$ and $\D$ are given oracle access to $\Enc_k$), Definition~\ref{def:qINDqse} translates to the notion of q-IND-CPA from~\cite{BJ15}. And, also in~\cite[Theorem~B.2]{BJ15}, this notion q-IND-CPA has been shown to be equivalent to another notion, q-IND-CPA-2, which considers the case where in the above game there are two messages chosen by the adversary, $\rho^0$ and $\rho^1$, instead of a single state $\rho$ and the fixed $\ketbra{0}{0}$ state. In other words, the q-IND-CPA-2 game can then be summarized as follows.

\begin{definition}[q-IND-CPA-2]\label{def:qINDCPA2}
A qSKE $(Q.\Gen, Q.\Enc, Q.\Dec)$ is {\em q-IND-CPA-2 secure}) if any QPT adversary $\A$ having oracle access to $Q.\Enc_{k}$ has probability at most negligibly better than guessing of winning the following game:
\begin{enumerate}
\item \A generates two plaintext state messages $\rho^0,\rho^1 \in \X$ and sends them to the challenger \C;
\item \C flips a random bit $b \rand \bin$;
\item \C traces out (discards) $\rho^{1-b}$;
\item \C encrypts $\rho^b$ to $\varphi \from Q.\Enc_k (\rho^b)$;
\item \A receives back $\varphi$ from \C;
\item \A outputs a bit $b'$, and wins the game iff $b=b'$.
\end{enumerate}

\end{definition}

Finally, notice that Definition~\ref{def:qINDCPA2} is equivalent to Definition~\ref{def:gqIND} when the encryption algorithm $Q.\Enc$ is actually a type-(2) unitary operator $U_\Enc$ of a classical simmetric-key encryption scheme $(\Gen,\Enc,\Dec)$. This concludes the proof of Theorem~\ref{theo:qind-gqind}.\qed

\section{Semantic Security with Quantum Advice States}\label{app:qaSEM}

In Section~\ref{qasemquestion} we left open the question of what happens if the messages (and the function to be computed about the message) are still classical, but the auxiliary advice can be a quantum state. Here we discuss this scenario.

A possible first approach is the following: Let $U_{\xi_m}$ be a unitary (the {\em advice unitary}) that takes as input a basis element $\ket{x}$ representing a classical $m$-bit message $x$ as well as (if required) an auxiliary register prepared by \C and computes a quantum advice state $\ket{\xi_m}$. Then we can define the following challenge phase and the corresponding notion.

\subheading{Quantum-advice SEM challenge phase (qaSEM):} \A sends \C a challenge template consisting of: a poly-sized classical circuit $S_m$ specifying a distribution over $m$-bit plaintexts $x$, a classical description of the advice unitary $U_{\xi_m}$, and a target function $f_m:\bin^m \rightarrow \bin^{\poly{n}}$ for an $m\in\NN$ of \A's choice. \C replies with the pair $(\Enc_k(x), \ket{\xi_m})$, where $x$ is sampled according to $S_m$ and $\ket{\xi_m}$ is computed by constructing and evaluating $U_{\xi_m}$ on $\ket{x}$. \A's goal is to output $f_m(x)$. Again, \sim plays in the reduced game and learns only $\ket{\xi_m}$.

\begin{definition}[qaSEM-qCPA]\label{def:qasemqcpa} A secret-key encryption scheme is said to be qaSEM-qCPA-secure if for every quantum polynomial-time machine \A, there exists a quantum polynomial-time machine \sim such that the challenge templates produced by \sim and \A are identically distributed and the success probability of \A winning the qaSEM-qCPA game is negligibly close (in $n$) to the success probability of \sim winning the reduced game. 
\end{definition}

At a first glance it might seem as if qaSEM-qCPA is equivalent to SEM-qCPA as a security notion because having a classical advice function $h(x)$ is just a special case of a quantum advice circuit depending on $x$. Notice however that as we restrict $U_{\xi_m}$ to be a circuit computing a unitary operator $U\ket{x}$ this notion is meaningless because it is trivially achievable by {\em any} encryption scheme. The reason is that, in this case, both \A and \sim can always apply $U^{-1}$ to $\ket{\xi_m}$ to recover the message -- it is like restricting the classical notion to the case where the advice function $h$ is just a permutation {\em chosen} by \A (resp. \sim). 

To fix this problem, we have to allow more general quantum circuits $U'_{\xi_m}$ that can somehow provide non-reversible information, for example by applying some partial measurement at the end, or by providing \A (resp. \sim) only with {\em some} output qubits, while \C keeps the others. Towards this end let $U'_{\xi_m}$ be an arbitrary quantum circuit (the {\em advice circuit}) that takes as input a basis element $\ket{x}$ representing a classical $m$-bit message $x$, a quantum state $\rho_m$ provided by \A (resp. \sim) (that includes possibly needed auxiliary registers), and computes a quantum advice state $\xi_m$. This leads to the following definition:

\subheading{Ideal quantum advice, classical SEM challenge phase (iqSEM):} \A sends \C a challenge template consisting of: a poly-sized classical circuit $S_m$ specifying a distribution over \mbox{$m$-bit} plaintexts, a classical description of the quantum advice circuit $U'_{\xi_m}$, a quantum state $\rho_m$, and a target function $f_m:\bin^m \rightarrow \bin^{\poly{n}}$ for an $m\in\NN$ of \A's choice. \C replies with the pair $(\Enc_k(x), \xi_m)$, where $x$ is sampled according to $S_m$ and $\xi_m$ is computed by constructing and executing $U'_{\xi_m}$. \A's goal is to output $f_m(x)$.

The iqSEM-qCPA game is defined by qCPA learning phases and a iqSEM challenge phase. This leads to the following definition:

\begin{definition}[iqSEM-qCPA]\label{def:iqasemqcpa} A secret-key encryption scheme is said to be iqSEM-qCPA-secure if for every quantum polynomial-time machine \A, there exists a quantum polynomial-time machine \sim such that the challenge templates produced by \sim and \A are identically distributed and the success probability of \A winning the iqSEM-qCPA game is negligibly close (in $n$) to the success probability of \sim winning the reduced game. 
\end{definition}

This notion turns out to be equivalent to SEM-qCPA (and IND-qCPA). The reason is that having a quantum advice state does not really give any additional power to \A in the case of classical messages and target functions. This can be seen from the reduction between IND-qCPA and SEM-qCPA -- see the proofs of Propositions~\ref{prop:indq:semq} and~\ref{prop:semq:indq}. In one case, the advice state is only used to pass \A's code from the first circuit of \sim to the second one (which can also be done with a quantum advice state), in the other case it is set to a constant function. 

It seems like introducing arbitrary quantum advice circuits (as opposed to {\em superpositions of classical advices}) is not meaningful as long as the messages are still classical. Consequently, we proceed in Section~\ref{sec:qsem} with our search for a notion of quantum semantic security considering quantum message states.

\end{document}